\newcommand{\bm}{\boldsymbol}
\tikzstyle{cloud} = [draw, ellipse,fill=red!20, node distance=0.87cm,
\tikzstyle{line} = [draw, -latex']
\pgfplotsset{compat=1.15}
\def\bX{\mathbf{X}}
\def\b0{\mathbf{0}}
\def\R{\mathcal{R}}
\def\bv{\bm{v}}
\def\bx{\bm{x}} 
\def\bw{\bm{w}}
\def\bE{\bm{E}}
\def\bX{\bm{X}}
\def\b0{\bm{0}}
\newcommand{\keywords}[1]{\par\noindent\textbf{Keywords :} #1}
\theoremstyle{plain}
\newtheorem{theorem}{Theorem}[section]
\newtheorem{lemma}[theorem]{Lemma}
\newtheorem{proposition}[theorem]{Proposition}
\newtheorem{remark}[theorem]{Remark}
\numberwithin{equation}{section}
\begin{document}
\title{A model for cholera with infectiousness of deceased individuals and vaccination}
\author{Annour Saad Abdramane$^{\bf a}$ \and Julien Arino$^{\bf b}$ \and Patrick Mimphis Tchepmo Djomegni$^{\bf c}$\and Mahamat Saleh Daoussa Haggar$^{\bf a}$ \\
\small $^{\bf a}$Laboratory L2MIAS, University of N'Djamena, Chad\\
\small	$^{\bf b}$Department of Mathematics, University of Manitoba, Winnipeg, MB, Canada\\
\small	$^{\bf c}$School of Mathematical and Statistical Sciences, North-West University, South Africa}

\date{\today}

\maketitle
\begin{abstract}
    A cholera transmission model is formulated that incorporates water-borne and horizontal transmissions as well as infectivity of deceased individuals. 
    The model includes an Allee effect for the bacteria in the environment and imperfect and waning vaccination.
    Mathematical properties of the model are investigated, with an environmental bistability shown to combine with a vaccine-driven one, although a computational search for the latter fails to detect its presence in realistic parameter ranges.
    The computational analysis also considers the interplay between vaccination strategy, vaccine efficacy and waning, as well as the effect of transmission of the disease during funeral rites.
    The effect of control scenarios such as WASH or Safe and dignified burials are assessed.
\end{abstract}
\keywords{Cholera, infectious bodies, Allee effect, vaccination, policy assessment.}	

%%%%%%%%%%%%%%%%%%%%%%
%%%%%%%%%%%%%%%%%%%%%%
%%%%%%%%%%%%%%%%%%%%%%
%%%%%%%%%%%%%%%%%%%%%%
\section{Introduction}
\label{sec:introduction}

Cholera is an acute diarrhoeal infection that is a global public health threat causing devastating epidemics, particularly in regions where access to clean water and sanitation infrastructure is limited \cite{wang2022mathematical}.
For instance, between 2010 and 2017, cholera triggered large-scale outbreaks, especially in Haiti and Yemen.
It remains endemic in certain regions of sub-Saharan Africa and Asia, with more than 2.5 million cumulative cases reported in November 2021 \cite{federspiel2018cholera}.
It is estimated that approximately 1.3 billion people live in areas at risk of cholera \cite{mondiale2017cholera}.
Each year, these regions report around 2.9 million cases, with nearly 95,000 deaths \cite{ganesan2020cholera}.

Cholera is caused by the bacterium \textit{Vibrio cholerae}, primarily from serogroups O1 and less frequently O139 \cite{world2010cholera, piret2021pandemics}.
It primarily inhabits aquatic environments such as estuaries, rivers and groundwater, particularly in areas contaminated by human waste \cite{finkelstein1996cholera}.
Humans are the only known natural hosts of \textit{V. cholerae}.
The complex dynamics of cholera in the wild are deeply tied to human excretion and the environment.
There is evidence \cite{colwell1996global} that the environment can be a reservoir of \textit{V. cholerae} for long periods of time, but also known cases where cholera was not detectable when an outbreak had terminated \cite{islam1990survival,xu1982survival}.
Altogether, the situation is highly location and context-dependent.

Mathematical modelling of cholera remains less developed compared to other infectious diseases \cite{wang2022mathematical}.
The earliest model of waterborne disease we are aware of is \cite{capasso1979mathematical}, which assumes a loop between humans and an environmental reservoir where \textit{V. cholerae} can survive and proliferate.
In \cite{codecco2001endemic}, a more detailed description of the infection in humans was introduced, with the model being specific to cholera.
This was generalised in \cite{pascual2002cholera} by incorporating a fourth equation representing the volume of water in which pathogens develop.
In \cite{joh2009dynamics}, the dynamics of infectious diseases whose primary mode of transmission is indirect and mediated by contact with a contaminated reservoir was studied.
Most mathematical models of cholera follow \cite{codecco2001endemic} and assume that cholera is introduced into the environment by contaminated humans and eliminated from it by pathogen death.
Few models deviate from the formulation of \cite{codecco2001endemic}.
Some assume logistic dynamics \cite{wang2012cholera} or logistic dynamics with harvesting \cite{neilan2010mathematical}.
Even less frequent are models encoding an Allee effect \cite{Kolaye_Guilsou_2023, Yang2020}, which is the route we follow here.
Indeed, this allows to cover both the case where cholera becomes extinct (locally) in the absence of new human contamination of the environment and the one where cholera is able to sustain itself intrinsically.

Beyond environmental exposure, cholera has multiple transmission routes that present unique challenges.
It is mostly transmitted through the faecal-oral route, i.e., the ingestion of water or food contaminated with the faeces of an infected person.
Direct person-to-person transmission is rare but happens \cite{muzembo2022cholera}.
Clinical studies indicate that individuals who die from cholera remain infectious for some time.
Transmission has been shown to occur during the washing or transportation of bodies for funeral rites as well as during events peripherally to funerals when the deceased bodies are still present \cite{gunnlaugsson1998funerals}.
While this is a major issue in the event of natural disasters and in refugee camps \cite{conly2005natural, morgan2004infectious}, it is also a problem during standard epidemics \cite{cardoso2016study, gunnlaugsson1998funerals}.
Modelling of this corpse-based transmission route is not frequent.
The few examples that we know of concern transmission of the Ebola virus \cite{adefisan2018mathematical, tasse2024metapopulation}.
To the best of our knowledge, this mechanism has never been studied in the context of a waterborne disease like cholera.

The prevention of cholera relies mainly on hygiene measures, including access to safe drinking water, adequate sanitation and strict personal hygiene practices, the so-called water, sanitation and hygiene (WASH) interventions \cite{taylor2015impact}.
However, vaccination also plays a critical role, even though the cholera vaccine is not part of the standard immunisation schedule.
It is recommended for individuals at higher risk of infection, including travellers to cholera-endemic areas, populations living in poor sanitary conditions and refugees in camps where a cholera outbreak could occur.
Studies have shown that inactivated oral vaccines provide significant protection and long-lasting immunity.
In contrast, live attenuated oral vaccines have demonstrated lower effectiveness, likely due to insufficient intestinal colonisation by the attenuated strain \cite{marie2018pathophysiology}.
The oral cholera vaccine (OCV) is given as two doses and is less effective in children under five years old than in those aged five years and older \cite{leung2022optimizing}.
Because their effectiveness is time-limited, vaccines are generally used as a complementary measure to hygiene interventions \cite{finkelstein1996cholera}.
Contrary to cholera, modelling of vaccination has a rich history in mathematical epidemiology, dating back all the way to the epidemiological model of Daniel Bernoulli \cite{Bernoulli1760}, which included inoculation.
Early considerations used the idea that vaccination renders individuals immune, moving them into the recovered compartment \cite{tien2010multiple, tuite2011cholera}.
However, because of the homogeneity of content assumption of compartmental models \cite{JacquezSimon1993}, this way of proceeding makes vaccinated individuals indistinguishable from recovered ones.
As a consequence, later vaccination models started to introduce a specific compartment for vaccinated individuals.
Starting with the SIS model with vaccination of \cite{Kribs-ZaletaVelasco-Hernandez2000}, there came the realisation that imperfect and waning vaccination could imply more complicated dynamics, namely subcritical endemic equilibria.
This phenomenon, which is typically called a backward bifurcation, had already been characterised in models incorporating treatment \cite{dushoff1998backwards, hadeler1997backward, huang1992stability}.
It was formalised in an SIRS model with vaccination by \cite{arino2003global} and studied in multiple other papers \cite{ArinoCookeVdDVelasco2004,greenhalgh2001mathematical, gumel2012causes}.
When a backward bifurcation is present, the disease can persist endemically even when the reproduction number with vaccination is less than 1.
Regarding cholera itself, models with vaccination have been considered for a while.
Besides the already cited \cite{tien2010multiple, tuite2011cholera}, \cite{ghosh2015modeling} studied the effect of seasonality in pathogen transmission on vaccination strategies under several disease scenarios.
Their model is an extension of the SIWR model in \cite{tien2010multiple}.
Other cholera models with vaccination include \cite{alarydah2013modeling, onuorah2022mathematical, sun2017transmission, zhou2011modeling}.
Several cholera models with vaccination exhibit a backward bifurcation \cite{abdul2021stability, sharma2021, sharma2021backward, sharma2021bifurcation}.

The objective of our study is therefore to extend a classic cholera model by incorporating an Allee effect, modelling vaccination dynamics and considering deceased individuals as a potential source of contamination.
This comprehensive approach allows us to evaluate the extent to which such a contamination route is hazardous in the presence of targeted control measures.

The remainder of this work is structured as follows. In Section~\ref{sec:mathematical-model}, we derive the mathematical model, while the analytical study of the model is presented in Section~\ref{sec:math-analysis_v}, with some of the results proved in an appendix for legibility. 
Section~\ref{sec:computational-analysis} presents a computational analysis of the model; a Discussion concludes the work.

%%%%%%%%%%%%%%%%%%%%%%%%%%% 
%%%%%%%%%%%%%%%%%%%%%%%%%%% 
%%%%%%%%%%%%%%%%%%%%%%%%%%% 
%%%%%%%%%%%%%%%%%%%%%%%%%%% 
\section{Model formulation} 
\label{sec:mathematical-model}
We start from a classic $SIRW$ model for water-borne transmission of cholera including human to human transmission and introduce an additional compartment, $D$, to account for individuals having died from cholera but whose body are still infectious.
Furthermore, we consider an explicit compartment for vaccinated individuals, assuming that the vaccine is both imperfect and waning, as was studied for a classic SIRS model in \cite{arino2003global}.
In the sequel, we denote \( N_{H} = S + I + R + V\) the total (live) human population.
See Table~\ref{tab:state-variables} for state parameters and Figure~\ref{fig:flow_diagram_vaccination} for the flow diagram of the model.

\begin{table}[htbp]
\centering
\begin{tabular}{cl}
\toprule
Variable & Description \\
\midrule
$S$ & Number of susceptible individuals \\
$I$ & Number of infected and infectious individuals \\
$R$ & Number of recovered immune individuals \\
$V$ & Number of vaccinated individuals \\
$N_H$ & Total number of (live) humans \\
$D$& Number of infectious dead individuals\\
$W$ & Concentration of \emph{V. cholerae} in water \\
\bottomrule
\end{tabular}
\caption{State variables of the SIRVDW model.}
\label{tab:state-variables}
\end{table}

Cholera vaccination is typically not administered at birth, being generally used in the context of heightened risks.
As a consequence, we assume that there no vaccination at birth and that all individuals are born susceptible at the rate $b$.
Individuals in all compartments are subject to natural death at the \emph{per capita} rate $d_H$.
Susceptible individuals are subject to a force of infection $\lambda_S$, while those vaccinated are subject to a force of infection $\lambda_V$; see below for details.
After becoming contaminated, individuals either recover at the \emph{per capita} rate $\gamma$ or die at the \emph{per capita} rate $\delta$.
While they are infectious, individuals shed \emph{V. cholerae} into the environment at the rate $\zeta$.
Furthermore, deceased bodies that have not yet been sanitized or buried continue to shed the pathogen into the environment at a rate $\eta\zeta$, where $\eta$ represents their relative shedding capacity.
A conversion must be used with this parameter; see details in Appendix~\ref{app:nondimensionalisation-W}.
The recovery rate $\gamma$ incorporates both natural recovery and that induced by treatment, which is not explicitly modelled here. 
As a consequence, the disease-induced death rate $\delta$ incorporates both the high mortality of untreated individuals and the much lower one of treated ones.
Upon recovery into the $R$ compartment, individuals are temporarily immune to reinfection by cholera. 
Finally, dead individuals are sanitized or properly buried at the \emph{per capita} rate $e$.

\begin{figure}[htbp]
    \centering
    \begin{tikzpicture}[scale=1.1, 
        every node/.style={transform shape},
        auto,
        box/.style={minimum width={width("N-1")+9pt},
        draw, rectangle}]
        % Variables d'etat
        \node [box, fill=green!30] at (0,0) (S) {$S$};
        \node [box, fill=red!40] at (3,0) (I) {$I$};
        \node [box, fill=yellow!20] at (1.5,-2) (V) {$V$};
        \node [box, fill=yellow!20] at (6,1) (R) {$R$};
        \node [box, fill=red!40] at (6,-1) (D) {$D$};
        \node [box, fill=blue!20] at (3,2) (W) {$W$};
        % Pour les fleches d'entree et sorties
        \coordinate[left=1cm of S] (birthS);
        \coordinate[below of=S] (deathS);
        \coordinate[below=1cm of I] (deathI);
        \coordinate[right=1cm of R] (deathR);
        \coordinate[below=1cm of V] (deathV);
        \coordinate[left=1.3cm of W] (birthW);
        \coordinate[right=1.3cm of W] (deathW);
        \coordinate[right=1cm of D] (deathD);
        %% Birth and death flows
        \path [line, thick] (birthS) to node [midway, above] (TextNode) {$b$} (S);
        \path [line, thick] (birthW) to node [midway, above] (TextNode) {$b_C(W)$} (W);
        \path [line, thick] (W) to node [midway, above] (TextNode) {$d_C(W)$} (deathW);
        \path [line, thick] (I) to node [midway, right] (TextNode) {$d_HI$} (deathI);
        \path [line,  thick] (R) to node [midway, above] (TextNode) {$d_HR$} (deathR);
        \path [line,  thick] (S) to node [midway, left] (TextNode) {$d_HS$} (deathS);
        \path [line, thick] (D) to node [midway, above] (TextNode) {$eD$} (deathD);
        \path [line, thick] (V) to node [midway, right] (TextNode) {$d_HV$} (deathV);
        %% Inter-compartmental flows
        \path [line,   bend left=10,thick] (S) to node [above, sloped] (TextNode) {$vS$} (V);
        \path [line, bend left=10, thick] (V) to node [below, sloped] (TextNode) {$\theta V$} (S);
        \path [line, thick] (S) to node [midway, above, sloped] (TextNode) {$\lambda_S S$} (I);
        \path [line, thick] (V) to node [above, sloped] (TextNode) {$ \lambda_V V$} (I);
        \path [line, thick, dotted] (I) to node [midway,above,sloped] (TextNode) {$\zeta I$} (W);
        \path [line, thick, dotted] (D) to node [near start,above,sloped] (TextNode) {$\eta\zeta D$} (W);
        \path [line, thick] (R) -- (6,2.75) -- node [midway, above, sloped] (TextNode) {$\varepsilon R$} (0,2.75) to (S);
        \path [line, thick] (I) to node [midway, above, sloped] (TextNode) {$\delta I$} (D);
        \path [line, thick] (I) to node [near end, above, sloped] (TextNode) {$\gamma I$} (R);
        \end{tikzpicture}
    \caption{Flow diagram of the SIRVDW model \eqref{sys:general_form_v}.}
    \label{fig:flow_diagram_vaccination}
\end{figure}

Susceptible individuals are vaccinated at the \emph{per capita} rate $v$.
When vaccinated, the force of infection they are subject to is reduced by a factor $1-\sigma$, where $\sigma\in[0,1]$ is the vaccine effectiveness, which we assume to be the same regardless of the source of the immunological challenge.
Therefore, $\lambda_V = (1-\sigma) \lambda_S$ is the force of infection acting on vaccinated individuals.
Note that it is important to distinguish this effectiveness $\sigma$ from vaccine efficacy, which is typically defined as the reduction in disease severity (its \emph{attack rate}) induced by a vaccine \cite{MilwidSteriuArinoHeffernanEtAl2016}.
Vaccine protection is lost after some time because of so-called \emph{waning}; the exponentially distributed average duration of protection by the vaccine is $1/\varepsilon$ time units.

Susceptible or vaccinated individuals acquire the infection through 3 pathways: from the water through the fecal-oral route, horizontally from person to person transmission or by contact with unsanitised dead bodies, with the transmission coefficients for these different pathways $\beta_W$, $\beta_I$ and $\beta_D$, respectively.
Standard (proportional) incidence is used for direct contamination between humans (alive or dead), while a saturating incidence function $\beta_WW/(1+W)$ is used for contamination by the environment.
As a consequence, the force of infection acting on susceptible individuals takes the form 
\[
\lambda_S = \beta_W\frac{W}{1+W}+\beta_I\frac{I}{N_H}+\beta_D\frac{D}{N_H}.
\]
Note that this standard proportional formulation preserves density-independent transmission at the per-corpse level, ensuring that the contact rate with dead bodies scales relative to the total living human population rather than growing without bounds.
Since we assume that vaccine effectiveness is the same regardless of the source of the immunological challenge, the force of infection acting on vaccinated individuals is $\lambda_V=(1-\sigma)\lambda_S$.

\begin{table}[htbp]
\centering
\begin{tabular}{cl}
\toprule
Parameter & Description\\ 
\midrule
\multicolumn{2}{l}{Human-related parameters} \\
$b$ & Recruitment rate of human population \\
$d_H$ & Natural death rate of humans \\
$\beta_{W}$ & Infection by water \\
$\beta_{I}$ & Infection by humans \\
$\beta_{D}$ & Infection from unburied corpses \\
$\gamma$ & Recovery rate \\
$\varepsilon$ & Rate of loss of natural immunity \\
$\delta$ & Disease-induced death rate \\
$e$ & Interment rate \\
\midrule
\multicolumn{2}{l}{Vaccination-related parameters} \\
$v$ & Rate of vaccination \\
$\theta$ & Rate of vaccine protection waning \\
$\sigma$ & Vaccine effectiveness\\
\midrule
\multicolumn{2}{l}{\emph{V. cholerae}-related parameters} \\
$\zeta$ & Pathogen shedding rate by humans \\
$\eta$ & Relative pathogen shedding rate by deceased individuals \\
$r_C$ & Intrinsic growth rate of pathogens \\
$A_C$ & Allee threshold for pathogens \\
$K_C$ & Carrying capacity of pathogens in water \\
\bottomrule
\end{tabular}
\caption{Parameters of the SIRVDW model.}
\label{tab:model-parameters}
\end{table}

Finally, regarding \emph{V. cholerae} in the water, they are subject to intrinsic population dynamics.
\emph{V. cholerae} is a known inhabitant of brackish riverine, estuarine and coastal waters \cite{almagro2013cholera}, but its population dynamics in natural environments is very complex \cite{lutz2013environmental}.
To capture this complexity, we consider an intrinsic growth of the bacteria exhibiting a strong Allee effect.
Such an assumption incorporates both the natural death of bacteria at low concentrations and their capacity to sustain themselves when their concentration crosses a certain limit, the Allee threshold $A_C$. The environment has a carrying capacity $K_C$, and the growth rate is governed by $r_C$.

%%%%%%%%%%%%%%%%%%%%%%%%%%%%%%%%%%%%%%%%%
Taking these assumptions into consideration, we obtain the following model,
\begin{subequations}
		\label{sys:general_form_v}
		\begin{align}
			\frac{d}{dt}S &= 
			b+\varepsilon R+\theta V -\lambda_S  S 
			-(v +d_H)S 
			\label{sys:general_form_dSv} \\ 
			\frac{d}{dt}I  &= 
			\lambda_S  S +\lambda_VV  -(\gamma + \delta +d_H)I  
			\label{sys:general_form_dIv} \\ 
			\frac{d}{dt}R  &=
			\gamma  I -(\varepsilon +d_H)R 
			\label{sys:general_form_dRv} \\ 
			\frac{d}{dt}V  &= v S -
			\theta V - \lambda_VV  -d_H  V 
			\label{sys:general_form_dVv}\\
			\frac{d}{dt}D  &= 
			\delta I -e D 
			\label{sys:general_form_dDv} \\
			\frac{d}{dt}W  &= 
			\zeta I + \eta\zeta D + r_C W \left( 1 - \frac{W}{K_C} \right) \left( \frac{W}{A_C} - 1 \right),
			\label{sys:general_form_dWv} 
		\end{align}
where the forces of infection acting on susceptible and vaccinated individuals are
\begin{equation}\label{sys:general_form_FoI}
\lambda_S = \beta_{W}\frac{W}{1+W}+\beta_{I}\frac{I}{N_H} +\beta_{D}\frac{D}{N_H} \quad\text{and}\quad \lambda_V = (1-\sigma)\lambda_S,
\end{equation}
\end{subequations}		
respectively. 
The model is considered  with the initial condition
\begin{equation}\label{sys:general_form_IC}
    S (0) > 0, I (0)\geq 0, R (0) \geq 0, V (0)\geq 0, D (0), W (0) \geq 0.
\end{equation}

It is sometimes useful to express the intrinsic dynamics of the pathogen in the water in terms of distinct growth and decay terms (as is done for instance in Figure~\ref{fig:flow_diagram_vaccination}). 
We can rewrite \eqref{sys:general_form_dWv} as
\begin{equation*}
    \frac{d}{dt}W = \zeta I + \eta\zeta D + b(W) - d(W),
\end{equation*}
where the intrinsic growth of the bacteria is given by the cooperative growth term
\begin{equation*}
    b(W) = r_C \left( \frac{1}{A_C} + \frac{1}{K_C} \right) W^2,
\end{equation*}
and the intrinsic decay is given by
\begin{equation*}
    d(W) = r_C W + r_C \frac{W^3}{A_C K_C}.
\end{equation*}
The growth $b(W)$ increases quadratically with $W$, reflecting the synergistic effect of the Allee threshold, while the decay $d(W)$ incorporates both the linear natural death rate of bacteria and the cubic $W^3$ crowding effect that enforces the carrying capacity limit.

%%%%%%%%%%%%%%%%%%
%%%%%%%%%%%%%%%%%%
%%%%%%%%%%%%%%%%%%
%%%%%%%%%%%%%%%%%%
\section{Mathematical analysis of the model}
\label{sec:math-analysis_v}
To simplify the remainder of the analysis, we denote $\bX=(S, I, R, V, D, W)$ the vector of state variables.
Most proofs as well as some intermediate results are deferred to appendices.

%\subsection{Preliminary considerations}
First remark that it is clear that solutions to \eqref{sys:general_form_v} exist and are unique.
By Lemma~\ref{lemma:solutions-positive}, the positive orthant is positively invariant under the flow of \eqref{sys:general_form_v}.
This can actually be tightened as follows.
\begin{lemma}
\label{lemma:bounded-region}
The region
\begin{equation} \label{eq:domain2}
    \Omega:=\left\{ \bX\in \mathbb{R}_+^6;\; N_H(t)\leq \frac{b}{d_H}, W(t)\leq W_{\max}, D(t)\leq \frac{ e
b}{\delta d_H} \right\}
\end{equation}
where $W_{\max}$ is the largest positive root of the polynomial $\zeta \left(1 + \eta \frac{\delta}{e}\right) \frac{b}{d_H} + r_C W \left( 1 - \frac{W}{K_C} \right) \left( \frac{W}{A_C} - 1 \right) = 0$, is attractive and positively invariant under the flow of \eqref{sys:general_form_v}.
\end{lemma}

We now turn our attention to the disease-free equilibrium, which is obtained by solving \eqref{sys:general_form_v} when $I=D=W=0$. 
It is easy to check that 
\begin{equation}\label{eq:DFE_general}
    \bE_0=\left(
        \dfrac{\theta +d_H}{\theta +v +d_H}\;\dfrac{b}{d_H}, 0, 0, \dfrac{v}{\theta +v +d_H}\;\dfrac{b}{d_H}, 0, 0\right)
\end{equation}
is the unique disease-free equilibrium (DFE) for \eqref{sys:general_form_v}.
In the absence of vaccination, the DFE takes the form $\bE_0=(b /d_H,0,0,0,0,0)$.

From \eqref{eq:DFE_general}, we get that at the disease-free equilibrium, a fraction
\begin{equation}\label{eq:fraction-vaccinated-DFE}
\Psi_0^v
=
\dfrac{v}{v +\theta +d_H}
\end{equation}
of individuals in the population is vaccinated, i.e., the \emph{vaccine coverage}. 
This expression is useful when setting parameters, since this is a value that is typically known.

Characterising the stability of the disease-free equilibrium is then important, because it provides some understanding of the behaviour of the system in its ideal (disease-free) state as well as some useful relationships between parameters. 
To do this, we compute the reproduction numbers of \eqref{sys:general_form_v} using the method of \cite{VdDWatmough2002}.

Consider the infected compartments $(I,D,W)$; let $\mathcal{F}$ be the vector representing new infections coming into and $\mathcal{U}$ other flows within or out of these compartments (with a negative sign), respectively, i.e.,
\[
\mathcal{F}=\begin{pmatrix}
    \lambda_S  S+ \lambda_V V \\ 0 \\0
\end{pmatrix}
\text{ and } \mathcal{U}=\begin{pmatrix}
    (\gamma +\delta +d_H)I \\ -\delta I +e  D  \\-\zeta  I - \eta\zeta D - r_C W \left( 1 - \frac{W}{K_C} \right) \left( \frac{W}{A_C} - 1 \right) 
\end{pmatrix}.
\]
Let $F=\left.\partial \mathcal{F}/\partial (I, D, W)\right|_{\bE_0}$ and $U=\left.\partial \mathcal{U}/\partial (I, D, W)\right|_{\bE_0}$. Note that for the latter matrix we evaluated the derivative of the Allee effect term at $W=0$, recovering $-r_C$. 
Then
\[
F=\begin{pmatrix}
    \beta_{I}\frac{S^{\star}+(1-\sigma) V^{\star}}{N^\star} & \beta_{D}\frac{S^{\star}+(1-\sigma) V^{\star}}{N^\star} & \beta_{W} (S^{\star}+(1-\sigma) V^{\star})\\ 0 & 0 & 0 \\0 & 0 & 0
\end{pmatrix}, 
\]
and 
\[
U=\begin{pmatrix}
    \gamma +\delta +d_H  & 0 & 0\\ -\delta & e & 0 \\-\zeta & -\eta\zeta &  r_C
\end{pmatrix}.
\]
The reproduction number in the presence of vaccination $\R_v$ is then the spectral radius of the next generation matrix $FU^{-1}$,  i.e., 
\begin{equation}\label{eq:R0_vaccination}
    \R_{v}= \frac{\left(e r_C\beta_{I}+\delta r_C\beta_{D}+\zeta (e+\eta\delta) \beta_{W}\frac{b}{d_H}\right)\left(\theta +(1-\sigma) v+d_H\right)}{e r_C(\gamma +\delta +d_H)\left(\theta +v +d_H\right)}.
\end{equation}
In the absence of vaccination ($v=0$), the reproduction number takes the form
 \begin{equation} \label{eq:R0}
    \R_{0}= \frac{e r_C\beta_{I}+\delta r_C \beta_{D}+\zeta (e+\eta\delta) \beta_{W} \frac{b}{d_H}}{er_C\left(\gamma +\delta +d_H \right)}.
\end{equation}

Notice that at any steady state equilibrium, $D^\star = \frac{\delta}{e}I^\star$, and therefore the shedding equations can be analyzed using an effective shedding rate from humans:
\begin{equation} \label{eq:zeta_eff}
    \zeta_{\text{eff}} = \zeta \left(1 + \eta \frac{\delta}{e}\right).
\end{equation}

It follows from \eqref{eq:R0_vaccination} and \eqref{eq:R0} that
\begin{equation}\label{eq:Rv_fct_R0}
     \R_{v}= \R_{0}\left(\dfrac{\theta +(1-\sigma) v+d_H}{\theta +v +d_H}\right).
\end{equation}
Since $\sigma \in [0, 1]$, it follows that $\R_v\leq \R_0$, with the inequality being strict whenever the vaccine has any effectiveness ($\sigma>0$).

Not unexpectedly because of the structural similarity of the human component of the model to systems such as the ones in \cite{arino2003global,arino2022bistability}, \eqref{sys:general_form_v} can exhibit a further \emph{backward bifurcation}.
We defer discussion of this issue to Appendix~\ref{app:vaccine-backward-bifurcation}; suffices to say here that even if it is a possibility, it does not occur under usual parameter regimes.
The following proposition summarises what is known about equilibria of \eqref{sys:general_form_v} under the assumption that the vaccine does not induce a backward bifurcation.
\begin{proposition}\label{prop:full-vaccination-equilibria-forward}
Assuming there is no vaccine-induced backward bifurcation (i.e., $a < 0$ in Theorem~\ref{thm:stability-EEP}), then \eqref{sys:general_form_v} possesses the following equilibria.
\begin{enumerate}
    \item There always exists a unique disease-free equilibrium $\bE_{0}$, which is locally asymptotically stable if $\R_v < 1$ and unstable if $\R_v > 1$.
    \item There always exists a unique locally asymptotically stable endemic equilibrium $\bE_m$ with $W^\star > K_C$.
    \item If $\R_v < 1$, there exists one subthreshold endemic equilibrium $\bE_u$ with $W^\star \in (0, A_C)$, which is unstable.
    \item If $1 < \R_v < \R_v^{SN1}$, there exist two subthreshold endemic equilibria with $W^\star \in (0, A_C)$: a locally asymptotically stable equilibrium $\bE_\ell$ and an unstable equilibrium $\bE_u$.
    \item If $\R_v = \R_v^{SN1}$, there exists one subthreshold endemic equilibrium $\bE_{SN}$ with $W^\star \in (0, A_C)$, which is a non-hyperbolic saddle-node.
    \item If $\R_v > \R_v^{SN1}$, there are no subthreshold endemic equilibria with $W^\star \in (0, A_C)$.
\end{enumerate}
\end{proposition}

\begin{proof}
    Uniqueness and local stability of the disease-free equilibrium are established in Section~\ref{app:subsec:full-model-DFE}. 
    Existence and stability of the macroscopic equilibrium $\bE_m$ is proved in Lemma~\ref{lemma:full-one-large-W-EEP}.
    The properties of the subthreshold equilibria are respectively established by Lemmas~\ref{lemma:full-subthreshold-Eu}, \ref{lemma:full-subthreshold-two-equilibria}, \ref{lemma:full-subthreshold-tangent} and \ref{lemma:full-subthreshold-none}.
\end{proof}

The situation described in Proposition~\ref{prop:full-vaccination-equilibria-forward} is shown in Figure~\ref{fig:forward-bifurcation}.
To interpret this situation epidemiologically, first note that uniqueness of the disease-free equilibrium follows from the fact that although the isolated cholera compartment has steady states at the quorum threshold $A_C$ and the carrying capacity $K_C$, neither can serve as a disease-free state since they generate a strictly positive force of infection in humans.
As a consequence, the only equilibrium for the cholera component that allows no infection is the one where cholera is absent.

Existence of the higher endemic equilibrium $\bE_m$ is a consequence of the mismatch between the bounded incidence of new human cases (since the human population is finite) and the cubically growing shedding required to sustain the bacteria when the cholera concentration exceeds its carrying capacity $K_C$.

\begin{figure}[htbp]
\centering
\begin{tikzpicture}
\begin{axis}[
    width=0.9\linewidth, height=0.55\linewidth,
    axis x line=bottom,
    axis y line=left,
    xmin=0, xmax=2.5,
    ymin=0, ymax=9,
    xlabel={$\R_v$},
    ylabel={$\{W^\star,I^\star\}$},
    xlabel style={at={(ticklabel* cs:1)}, anchor=north},
    ylabel style={at={(ticklabel* cs:1.08)}, anchor=south east, rotate=0},
    xtick={1, 1.5625},
    xticklabels={$1$, $\R_v^{SN}$},
    ytick={4, 7},
    yticklabels={$A_C$, $K_C$},
    clip=false,
    % legend cell align={left},
    % legend style={at={(0.02,0.98)}, anchor=north west, draw=black!30, fill=white, nodes={scale=0.9}}
]

% 1. DFE (stable).
\addplot[domain=0:1, samples=2, ultra thick, black] {0};
\node[anchor=south, text=black] at (axis cs:0.5, 0.05) {$\bE_0$};

% 2. DFE (unstable).
\addplot[domain=1:2.4, samples=2, ultra thick, black, dashed] {0};
\node[anchor=south, text=black] at (axis cs:2.1, 0.05) {$\bE_0$};

% ==========================================
% ENVIRONMENTAL LOAD (W*) BRANCHES
% ==========================================
% 3. Micro-endemic E_\ell (stable).
\addplot[domain=0:1.5, samples=50, ultra thick, blue, variable=\t] ({1 + 0.75*\t - 0.25*\t^2}, {\t});

% 4. Unstable Allee threshold E_u (unstable saddle).
\addplot[domain=1.5:4, samples=50, ultra thick, blue, dashed, variable=\t] ({1 + 0.75*\t - 0.25*\t^2}, {\t});

% 5. Macro-endemic E_m (stable).
\addplot[domain=0:2.4, samples=50, ultra thick, blue] {7 + 0.5*x - 0.05*x^2};

% ==========================================
% HUMAN INCIDENCE (I*) BRANCHES
% ==========================================
% E_\ell (stable)
\addplot[domain=0:1.5, samples=50, ultra thick, teal, variable=\t] ({1 + 0.75*\t - 0.25*\t^2}, {0.8*\t*(4-\t)});

% E_u (unstable)
\addplot[domain=1.5:4, samples=50, ultra thick, teal, dashed, variable=\t] ({1 + 0.75*\t - 0.25*\t^2}, {0.8*\t*(4-\t)});

% E_m (stable)
\addplot[domain=0:2.4, samples=50, ultra thick, teal] {3 + x};

% --- Annotations and markers ---

% Transcritical bifurcation dot.
\filldraw[black] (axis cs:1,0) circle (2.5pt);

% Saddle-node bifurcation dot.
\filldraw[black] (axis cs:1.5625, 1.5) circle (2.5pt);  % W* turn
\filldraw[black] (axis cs:1.5625, 3.0) circle (2.5pt);  % I* turn
\draw[dotted, thick, gray] (axis cs:1.5625, 3.0) -- (axis cs:1.5625, 0);

% Labelling the branches.
\node[anchor=north west, text=blue] at (axis cs:1.5, 1.05) {$\bE_\ell (W^\star)$};
\node[anchor=south east, text=blue] at (axis cs:0.4, 3.9) {$\bE_u (W^\star)$};
\node[anchor=south, text=blue] at (axis cs:1.8, 7.7) {$\bE_m (W^\star)$};

\node[anchor=north west, text=teal] at (axis cs:1.5, 2.65) {$\bE_\ell (I^\star)$};
\node[anchor=south east, text=teal] at (axis cs:0.6, 1.4) {$\bE_u (I^\star)$};
\node[anchor=south, text=teal] at (axis cs:2.0, 5.1) {$\bE_m (I^\star)$};
\end{axis}
\end{tikzpicture}
\caption{Bifurcation diagram depicting the steady-state environmental pathogen load $W^\star$ (blue) and human incidence $I^\star$ (teal) as functions of the vaccinated reproduction number $\R_v$.
The disease-free equilibrium $\bE_0$ exists for both variables and is shown in black.
Continuous curves are locally asymptotically stable, dashed ones are unstable.
$\R_0^{SN}$ denotes the saddle-node bifurcation point.
}
\label{fig:forward-bifurcation}
\end{figure}

More generally, considering Figure~\ref{fig:forward-bifurcation} we see that the situation is very different from the typical bistability that arises in epidemiological models (e.g., \cite{arino2003global,hadeler1997backward,Kribs-ZaletaVelasco-Hernandez2000}).
In classic models with bistability, the saddle node bifurcation occurs at some value $\R_v^{SN}<1$ and the infection remains eradicable whenever $\R_v<\R_v^{SN}$, with the disease-free equilibrium being typically globally asymptotically stable there.
Here, bistability holds for all $\R_v<R_v^{SN}$ and, moreover, $\R_v^{SN}>1$.
% Epidemiologically, \eqref{sys:general_form_v} features an un-eradicable backward bifurcation arising from the cholera ecology. 
If a sudden influx of contamination pushes the concentration of cholera past the unstable branch $\bE_u(W^\star)$, the bacteria autonomously replicate toward $K_C$. 
Once this higher endemic equilibrium is established, vaccination alone ($\R_v < 1$) cannot eliminate the disease and active interventions like water, sanitation and hygiene (WASH) must be deployed to sanitize the water and force the system back below the $\bE_u(W^\star)$ threshold.
We return to this in Section~\ref{subsec:WASH-vs-vaccination}, but let us point out here that this highlights the fact that cholera is a disease that embodies the ``no silver bullet'' observation in infectious disease control \cite{azman2020surveillance}: contrary to, e.g., smallpox or measles, no single intervention mechanism is likely to lead to eradication, both at the local scale of an outbreak or more generally.

To aid in parameterising the numerical investigations in the following section, we can also compute the steady-state vaccine coverage at an endemic equilibrium, just as we had at the disease-free equilibrium with \eqref{eq:fraction-vaccinated-DFE}.
At any endemic equilibrium $\bE^\star$ with endemic human infection level $I^\star$, the proportion of vaccinated individuals is given by
    \begin{equation}\label{eq:fraction-vaccinated-EEP}
        \Psi^\star= \dfrac{bv}{G(I)T(I)}+\dfrac{\left(b+\gamma \varepsilon I/(\varepsilon+d_H)\right)v}{G(I)T(I)}
    \end{equation}
where
\[
G(I)=(1+A_2)I+\dfrac{\left(b+\gamma \varepsilon I/(\varepsilon+d_H)\right)\left(\theta +(1-\sigma) k I+d_H+v\right) }{\left(\theta+(1-\sigma) kI +d_H\right)(kI+v+d_H)-\theta v  }
\]
and 
\[
T(I)=\left(\theta+(1-\sigma) kI +d_H\right)(kI+v+d_H)-\theta v 
\]
with $A_2=\gamma/(\varepsilon+d_H)$.
This expression provides a link between the endemic state and observable epidemiological quantities.

%%%%%%%%%%%%%%%%%%%%%%%%%%%%%%%
%%%%%%%%%%%%%%%%%%%%%%%%%%%%%%%
%%%%%%%%%%%%%%%%%%%%%%%%%%%%%%%
%%%%%%%%%%%%%%%%%%%%%%%%%%%%%%%
\section{Computational considerations} 
\label{sec:computational-analysis}

\begin{table}[htbp]
\centering
\footnotesize
\begin{tabular}{lllll}
\toprule
Parameter & Plausible range & Default value & Unit & Source \\
\midrule
\multicolumn{5}{c}{Human-related parameters} \\
$b$ & -- & -- & people$\times$day$^{-1}$ & Computed \\ 
$1/d_H$  & -- & 52.5 years & day & World Bank \\ %& \cite{worldbank_life_expectancy} 
$1/e$ & [0,15] & 2 & day & \cite{who_cholera_vaccination} \\
$1/\delta$ & [1,50] & 5 & day & \cite{canada_cholera}\\ 
$1/\gamma$ & [2,60] & 10 & day & \cite{pasteur_cholera}\\ 
$1/\varepsilon$ & [365,1825] & 730 & day & \cite{pasteur_cholera}\\
$1/v$ & [10,730] & 100 & day & \cite{who_cholera_vaccination}\\
$\sigma$ & (0,1) & 0.7 & unit-less & \cite{who_cholera_vaccination} \\
$1/\theta$ & [365,1825] & 730 & day & \cite{who_cholera_vaccination}\\ 
\midrule
\multicolumn{5}{c}{\emph{V. cholerae}-related parameters} \\
$\zeta$ & [1,10000] & 1042.752 & people$^{-1}\times$day$^{-1}$& \cite{canada_cholera} \\ 
$\eta$ & [0,1] & 1 & unit-less & Assumed \\
$1/r_C$ & [0.5,30] & 2 & day & Assumed \\
$A_C$ & $[10^3, 10^5]$ & $10^4$ & unit-less & Assumed \\
$K_C$ & $[10^5, 10^7]$ & $10^6$ & unit-less & Assumed \\
\midrule
\multicolumn{5}{c}{Disease transmission-related parameters} \\
$\R_0$ & $[1.1,20]$ & 15 & -- & See text \\
$\beta_{I}$ & [0.01, 1.0] & 0.1 & day$^{-1}$ & \cite{pasteur_cholera}\\ 
$\beta_{D}$ & [0.01, 1.0] & 0.1 & day$^{-1}$& \\ 
$\beta_{W}$ & [1e-08,1e-06] & 1.215594e-07 & day$^{-1}$& \\ 
\bottomrule	
\end{tabular}
\caption{Values of the parameters used in numerical work; see Table~\ref{tab:model-parameters} for meaning.} 
\label{tab:parameter-values_v}
\end{table}

For parameters, we consider ranges given in Table~\ref{tab:parameter-values_v}.
Let us first comment on some of the estimated parameters.
First, we compute $b$ so that in the absence of disease, $b/d_H$ gives the population for the location under consideration. 
This is taken to be 100,000 individuals.

We make use of the value of the basic reproduction number $\R_0$ to determine the values of the transmission parameters $\beta_W$, $\beta_I$ and $\beta_D$.
This value varies depending on several factors, notably the geographical context, environmental conditions, access to potable water and population density.
In the literature, estimates of the basic reproduction number $\R_0$ range from approximately 1.3 \cite{lupica2020computation,Yang_2018} to around 4 \cite{sisodiya2018pathogen}, but much higher values are observed in vulnerable populations such as those arising in refugee camps during humanitarian crises when access to potable water is extremely limited, with extreme values as high as 18 being reported \cite{chan2013historical}.
As a consequence, we use the range [1.1,20] for variations of $\R_0$.

Note that our model does not incorporate treatment \emph{explicitly}. As a consequence, the rate of recovery includes both natural recovery and treatment-induced recovery. This also means that, while untreated cholera kills from 25\% to 50\% of patients in 1-3 days, we consider a wider range for $\delta$ to encompass treated patients that die during treatment.

%%%%%%%%%%%%%%%%%%%%%%%%%%%%
%%%%%%%%%%%%%%%%%%%%%%%%%%%%
\subsection{Effect of the parameters}
\begin{figure}[htbp]
    \centering
    \begin{subfigure}{0.493\textwidth}
        \centering
        \caption{PRCC of $\R_0$}
        \includegraphics[width=\textwidth]{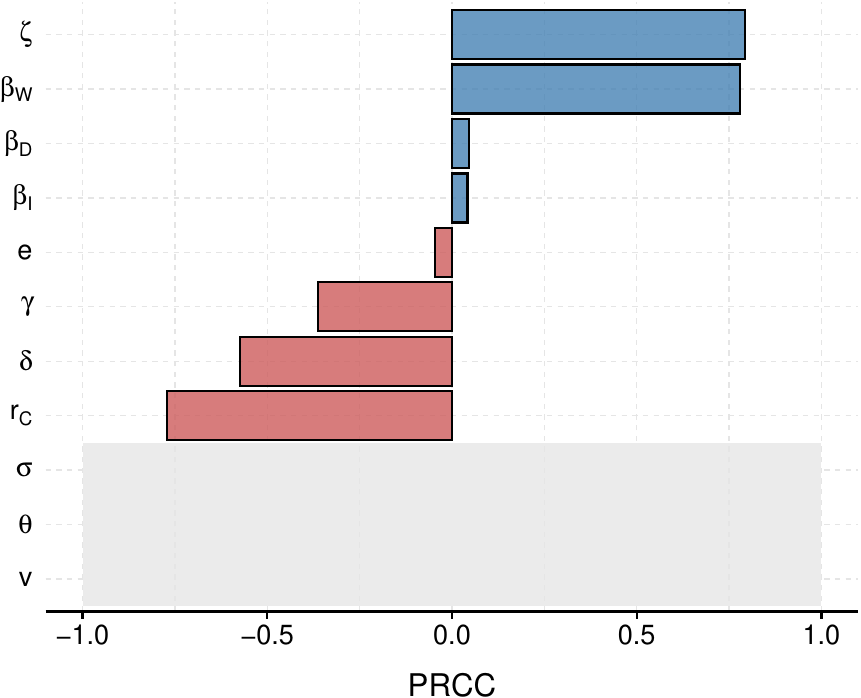}
        \label{fig:PRCC-R0}
    \end{subfigure}\hfill
    \begin{subfigure}{0.493\textwidth}
        \centering
        \caption{PRCC of $\R_v$}
        \includegraphics[width=\textwidth]{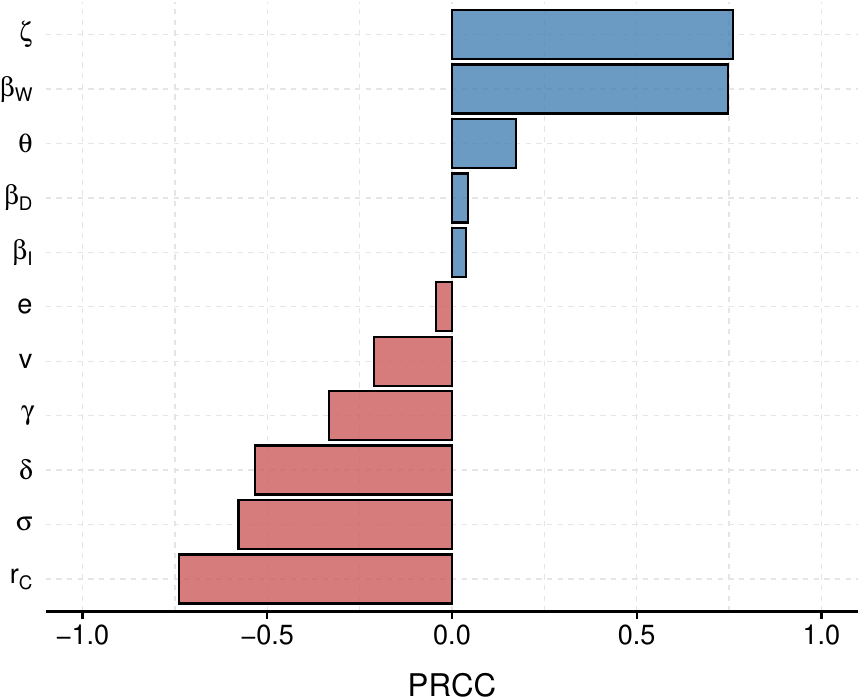}
        \label{fig:PRCC-Rv}
    \end{subfigure}
    \caption{Partial rank correlation coefficients (PRCC) of the basic reproduction number $\R_0$ and the vaccine reproduction number $\R_v$.}
    \label{fig:PRCC}
\end{figure}

We then proceed to a sensitivity analysis of $\R_v$ as a function of models parameters, using a global sensitivity analysis using partial rank correlation coefficients (PRCC).
Given the wide biological uncertainty in our parameter estimates, we utilize a global sensitivity analysis across the parameter space rather than local analytical partial derivatives, which allows us to robustly capture parameter interactions across their entire plausible ranges.
Results are shown in Figure~\ref{fig:PRCC}.
We observe that parameters having the most influence on $\R_v$ are the rate at which humans contaminate the water, vaccine effectiveness and the rate of contamination of humans by bacteria. 
The rate of disease-induced death of course greatly lowers the reproduction number, although this is not a control parameter of the disease. 
Besides vaccine effectiveness $\sigma$, other control parameters playing an important role in lowering $\R_v$ are the vaccination rate $v$ and the intrinsic parameters $r_C$ of the bacteria.
Interestingly, parameters related to the route of transmission through infectious dead individuals ($e$ and $\beta_D$) have virtually no effect on the reproduction number $\R_v$.

%%%%%%%%%%%%
\begin{figure}[htbp]
    \centering
    \begin{subfigure}{0.493\textwidth}
        \centering
        \includegraphics[width=\textwidth]{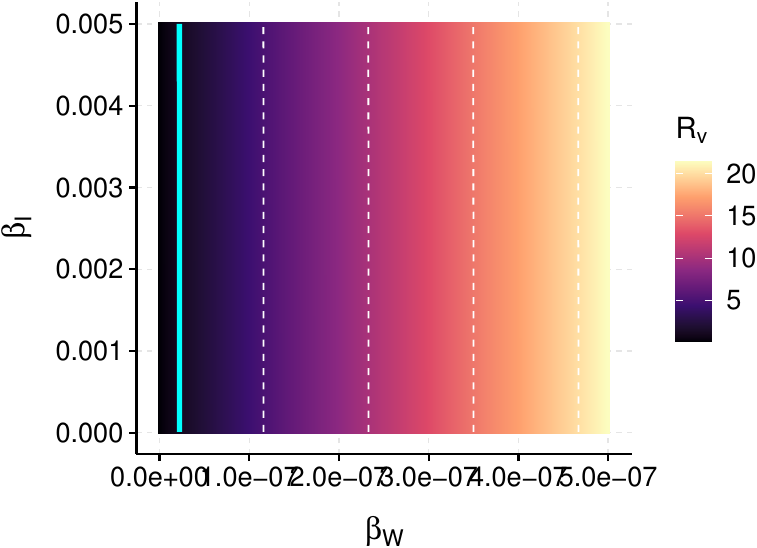}
        \caption{$\beta_W$ and $\beta_I$}
        \label{fig:Rv-fct-betaW-betaI}
    \end{subfigure}
    \begin{subfigure}{0.493\textwidth}
        \centering
        \includegraphics[width=\textwidth]{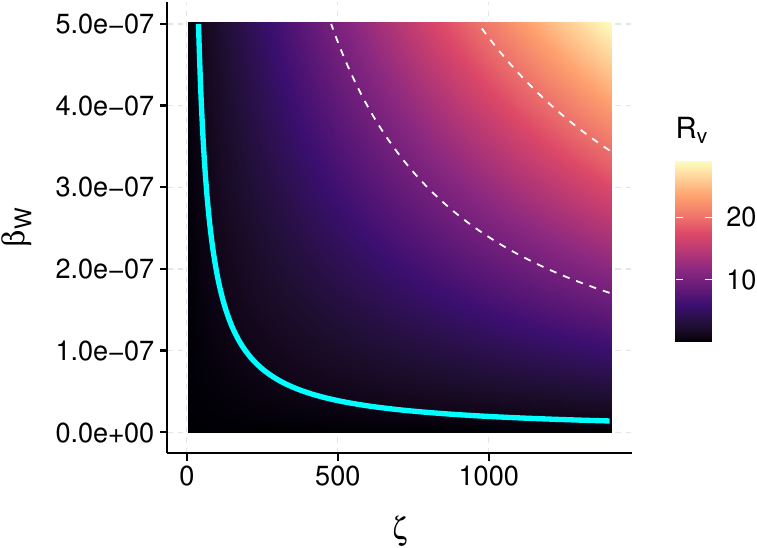}
        \caption{$\zeta$ and $\beta_W$}
        \label{fig:Rv-fct-zeta-betaW}
    \end{subfigure} \\
    \begin{subfigure}{0.493\textwidth}
        \centering
        \includegraphics[width=\textwidth]{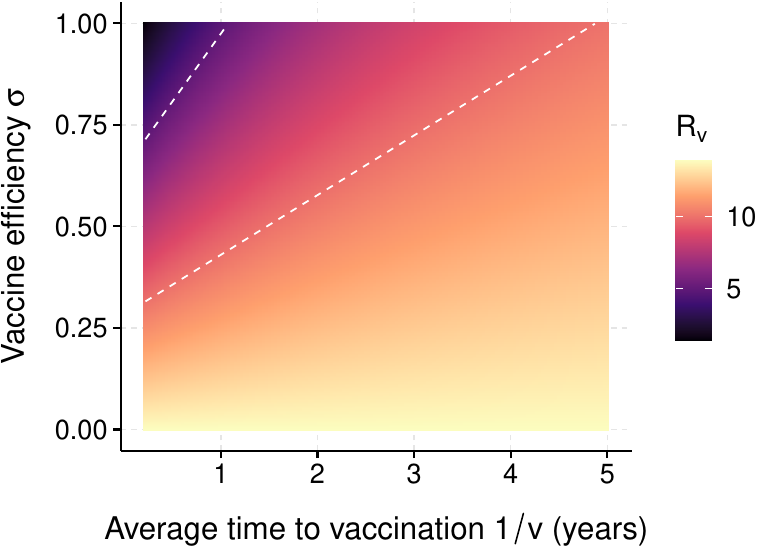}
        \caption{$1/v$ and $\sigma$}
        \label{fig:Rv-fct-sigma-v}
    \end{subfigure}
    \begin{subfigure}{0.493\textwidth}
        \centering
        \includegraphics[width=\textwidth]{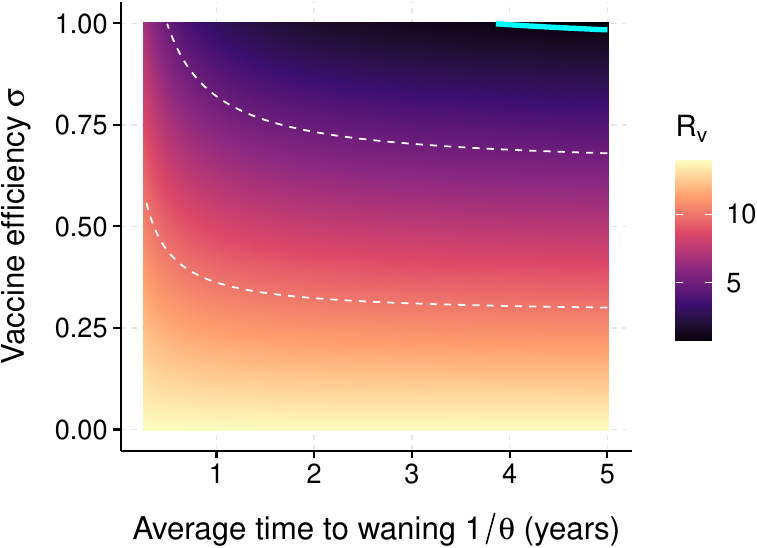}
        \caption{$1/\theta$ and $\sigma$}
        \label{fig:Rv-fct-theta-sigma}
    \end{subfigure}    
    \caption{Variation of $\R_v$ as a function of (a) water-to-human and human-to-human contamination coefficients, (b) shedding rate of the pathogen by infected humans and water-to-human contamination coefficient, (c) mean time to vaccination and vaccine efficiency and (d) mean time to vaccine waning and vaccine efficiency.
    Parameters taken as their default values in Table~\ref{tab:parameter-values_v} except for those made to vary in a figure.}
    \label{fig:heatmaps-Rv}
\end{figure}

In the heatmaps of Figure~\ref{fig:heatmaps-Rv}, we explore in more detail how different parameters influence the vaccinated reproduction number $\R_v$.
% Note that the values of $\R_v$ shown are large but not inconsistent with values found in the literature \cite{chan2013historical,phelps2018cholera}.
First, observe in Figure~\ref{fig:Rv-fct-betaW-betaI} that $\R_v$ depends much less on $\beta_I$ (human-to-human transmission) than it does on $\beta_W$ (water-to-human transmission), where both $\beta_I$ and $\beta_W$ vary in the same range.
The situation is the same when plotting $\R_v$ as a function of $\beta_W$ and $\beta_D$ or, even, $\beta_I$ and $\beta_D$, although we do not show these plots.

Figure \ref{fig:Rv-fct-zeta-betaW} then considers the effect of two parameters whose values can be changed by using the WASH strategy: the rate  $\zeta$ at which humans contaminate water and the coefficient $\beta_W$ of water-to-human pathogen transmission.
Here, the effect is similar: decreasing the rate of contamination of water by humans, e.g., by using proper sanitation methods, or that of contaminations of humans by contaminated water, e.g., using hand washing or filtration, have the effect of greatly reducing the vaccinated reproduction number $\R_v$.

We now turn to the effect of vaccination.
In Figure~\ref{fig:Rv-fct-sigma-v}, we consider the average time to vaccination in years and the vaccine efficiency.
Note that in an epidemic situation, high waiting times for vaccination as shown towards the right of Figure~\ref{fig:Rv-fct-sigma-v} are unrealistic, but they may be used in a routine vaccination scenario.
The ideal situation here is, unsurprisingly, with low waiting time to vaccination and high vaccine efficiency.
The latter is a characteristic of the vaccine and hard to address in the short term, but the former is ``easily'' accessible through policy.
Likewise, Figure~\ref{fig:Rv-fct-theta-sigma} shows how $\R_v$ changes as a function of the mean time $1/\theta$ to loss of vaccine protection and vaccine efficiency $\sigma$.
We see that vaccine efficiency is the main driver of $\R_v$ here, with however a marked turn for the worse when the mean time to waning is small. 

% Taken together, these results confirm that combined strategies aiming to reduce both direct and environmental transmission while increasing vaccination coverage are crucial for controlling the infection.

%%%%%%%%%%%%%%%
%%%%%%%%%%%%%%%
\subsection{Impact of disease-induced mortality and time-to-interment}
\label{subsec:disease-induced-death-and-interment}

The risks of cholera transmission associated to dead bodies are mostly due to the fact that after death, the muscles controlling sphincters relax, so the bacteria contained in the digestive system of an individual are released into clothing and the environment \cite{gunnlaugsson1998corpses}.
Regions with high cholera presence have a variety of post-death practices, both in terms of pre-interment rites and time-to-interment; see Table~\ref{tab:cholera-risks} for a short overview.

\begin{table}[htbp]
\centering
\resizebox{\textwidth}{!}{%
\begin{tabular}{@{}llll@{}}
\toprule
\textbf{Practice/Tradition} & \textbf{Typical window} & \textbf{Primary transmission vector} & \textbf{Risk level \& Evidence} \\ 
\midrule
Ritual body washing  & $<$ 24 hours (Islamic)  & Direct contact with highly infectious skin/fluids & High (direct) \cite{gunnlaugsson1998corpses} \\
Prolonged wakes      & 2--7 days (Christian/ATR) & Environmental shedding during viewing \& touch & Very high (direct) \cite{gunnlaugsson1998funerals} \\
Communal feasts      & During funerals         & Handling and sharing food with unwashed hands & High (social) \cite{gunnlaugsson1998funerals} \\
Traditional burial   & Variable                & Family-led preparation and wrapping of corpse & High (direct) \cite{gunnlaugsson1998funerals} \\
Environmental runoff & Long-term               & Shallow graves, necroleachate & Long-term \cite{who_cholera_vaccination} \\ 
\bottomrule
\end{tabular}%
}
\caption{Risk assessment of cholera transmission across cultural burial practices and timelines. ATR: African traditional religions.}
\label{tab:cholera-risks}
\end{table}

To better understand the role of deceased individuals, we consider the effect of interment practices and disease-induced mortality.
In Figure~\ref{fig:heatmap-e-delta}, we vary the mean time to interment $1/e$ and the mean time to death by the disease $1/\delta$, with a set initial condition  with $10$ infected individuals. 
We then compute numerically the area under the curves $\beta_D DS/N_H + (1-\sigma) \beta_D DV/N_H$ and $\delta I$ over $365$ days, respectively, giving the total number of infections due to dead bodies (Figure~\ref{fig:heatmap-e-delta-incidence}) and the overall number of deaths due to cholera (Figure~\ref{fig:heatmap-e-delta-deaths}).

% Recall that infected individuals are subject to three competing risks: recovery, disease-induced death and natural death.
% The latter has low probability since the average lifetime is 52.5 years.
% However, as the time to disease-induced death decreases (or the rate increases), this risk increases and starts to ``compete'' with that of recovery.

\begin{figure}[htbp]
  \centering
  \begin{subfigure}[b]{0.48\textwidth}
    \centering
    \includegraphics[width=\textwidth]{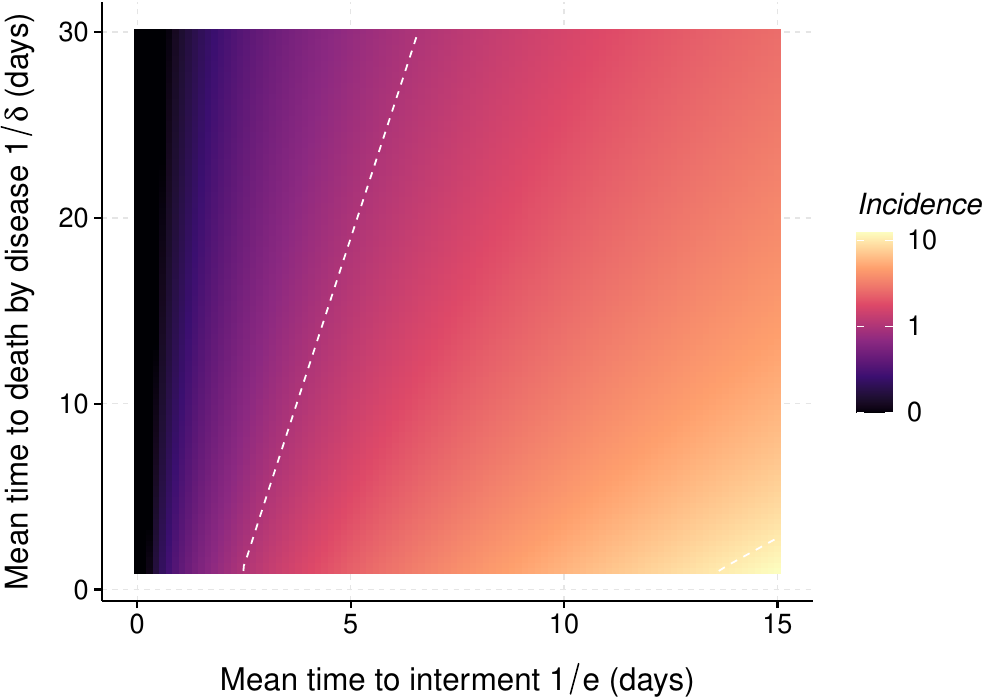}
    \caption{Incidence due to dead bodies}
    \label{fig:heatmap-e-delta-incidence}
  \end{subfigure}
  \hfill
  \begin{subfigure}[b]{0.48\textwidth}
    \centering
    \includegraphics[width=\textwidth]{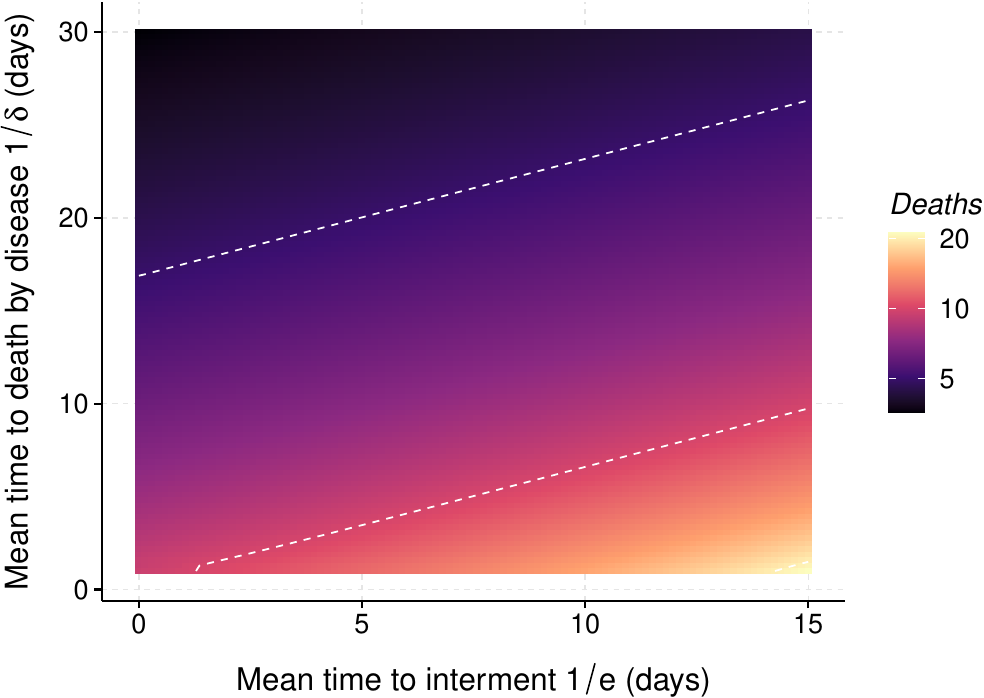}
    \caption{Disease-induced deaths}
    \label{fig:heatmap-e-delta-deaths}
  \end{subfigure}
  \caption{Total 1-year incidence of contaminations due to contacts with dead bodies (a) and disease-induced deaths (b) as a function of the mean time to interment $1/e$ and the mean time to death by the disease $1/\delta$.}
  \label{fig:heatmap-e-delta}
\end{figure}

Figure~\ref{fig:heatmap-e-delta-incidence} confirms that longer delays before interment increase the cumulative number of infections resulting from contacts with deceased individuals.
As noted from the sensitivity analysis, the overall impact of dead bodies remains small.
Simultaneously, Figure~\ref{fig:heatmap-e-delta-deaths} shows that as expected, the disease-induced death rate plays a primary role in how quickly the deceased compartment is populated, driving the subsequent infections resulting from contacts with deceased individuals.

\subsection{Safe and dignified burials}
To mitigate the transmission risks associated with handling corpses, health organizations have established strict guidelines for ``safe and dignified burials'' (SDB) \cite{ifrc2018safe,who2017safe}. 
These guidelines emphasize limiting physical contact with the deceased, disinfecting the body and its immediate environment and replacing traditional communal washing rites with safe, culturally adapted alternatives guided by trained burial teams. 
Adherence to these protocols effectively reduces the transmission rate from dead bodies $\beta_D$. 

In Figure~\ref{fig:heatmap-safe-burial}, we explore the epidemiological benefits of adopting SDB practices.
Transmission coefficients $\beta_W$ and $\beta_D$ are set to values such that $\R_0\simeq 2$ with $e$ and all other parameters set to their default values in Table~\ref{tab:parameter-values_v}.
We then vary $e$ and the percentage of the nominal value of $\beta_W$ and compute total incidence and number of deaths over a one year period as in Figure~\ref{fig:heatmap-e-delta}.

\begin{figure}[htbp]
  \centering
  \begin{subfigure}[b]{0.48\textwidth}
    \centering
    \includegraphics[width=\textwidth]{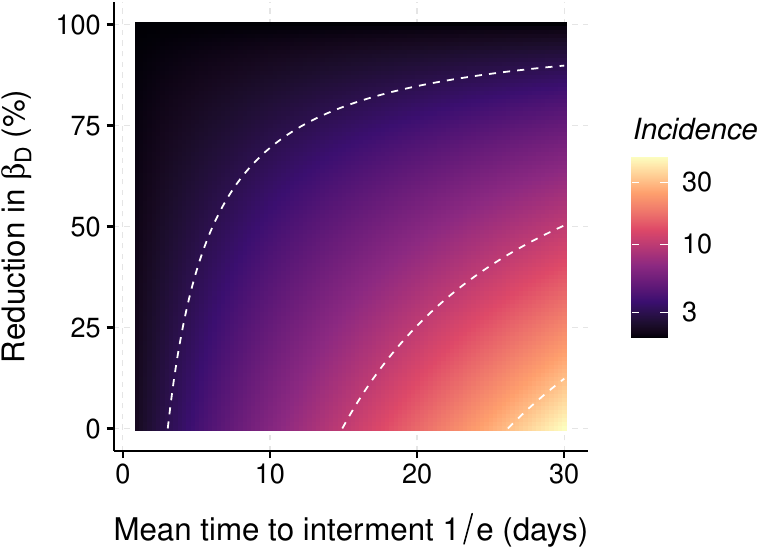}
    \caption{Total incidence}
    \label{fig:heatmap-safe-burial-incidence-all}
  \end{subfigure}
  \begin{subfigure}[b]{0.48\textwidth}
    \centering
    \includegraphics[width=\textwidth]{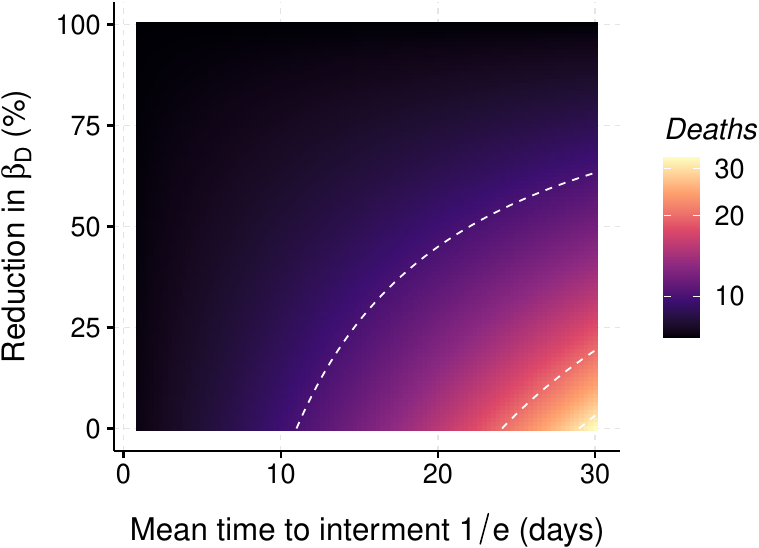}
    \caption{Disease-induced deaths}
    \label{fig:heatmap-safe-burial-deaths}
  \end{subfigure}
  \caption{Epidemiological impact of safe and dignified burials. 
  The total incidence from all sources (\subref{fig:heatmap-safe-burial-incidence-all}) and total disease-induced deaths (\subref{fig:heatmap-safe-burial-deaths}) are plotted as functions of the mean time to interment $1/e$ and the percentage reduction in the transmission parameter $\beta_D$.}
  \label{fig:heatmap-safe-burial}
\end{figure}

Figure~\ref{fig:heatmap-safe-burial} illustrates that combining rapid interment with a substantial reduction in transmission risk ($\beta_D$) drastically limits the number of secondary infections originating from the deceased compartment, underscoring the necessity of safe burial protocols in cholera outbreak management.

%%%%%%%%%%%%%%%%%%%%%%
%%%%%%%%%%%%%%%%%%%%%%
\subsection{Comparing the nature of introductions}

To understand better the role of infection stemming from improper burials, we consider pathogen introductions by way of a single infectious source at time $t=0$ inside a hamlet population of $N_H = 100$ individuals.
In poor countries, oral cholera vaccine (OCV) campaigns are typically deployed in localized high-risk hotspots, achieving moderate coverage.
Here, we simulate a realistic targeted vaccination campaign with a rate of $v = 0.0012$ (mean time to vaccination $\simeq 2.3$ years) and vaccine efficacy of $\sigma = 0.7$, which yields a disease-free vaccine coverage of $\simeq 46\%$.
However, the targeted OCV campaign successfully reduces the reproduction number below unity ($\R_v \simeq 0.74 < 1$), rendering the disease-free equilibrium locally asymptotically stable.
Pathogen load is modelled in scaled dimensionless units with an Allee threshold of $A_C = 9$, a carrying capacity of $K_C = 100$ and a water-to-human transmission rate of $\beta_W = 0.04$.
Starting from this pre-outbreak demographic baseline at $t=0$, we compare the consequences of introducing either a single active human case ($I(0) = 1$, $S(0) = 99$) or a single unburied corpse ($D(0) = 1$, $S(0) = 100$).

\begin{figure}[htbp]
  \centering
  \begin{subfigure}{0.493\textwidth}
    \centering
    \includegraphics[width=\textwidth]{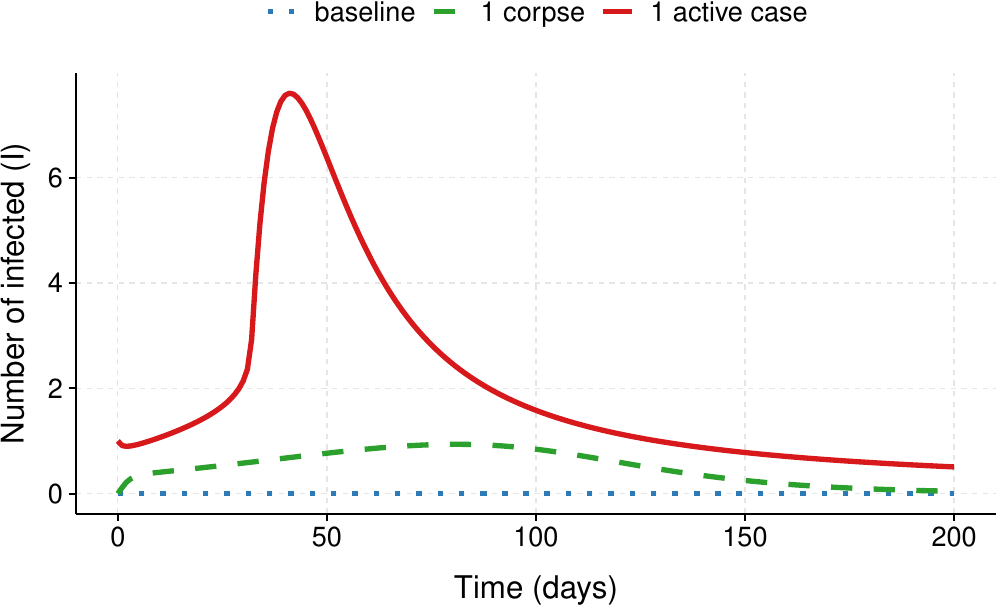}
    \caption{Infected human prevalence $I(t)$}
    \label{fig:plot-funeral-impulse-I}
  \end{subfigure}
  \begin{subfigure}{0.493\textwidth}
    \centering
    \includegraphics[width=\textwidth]{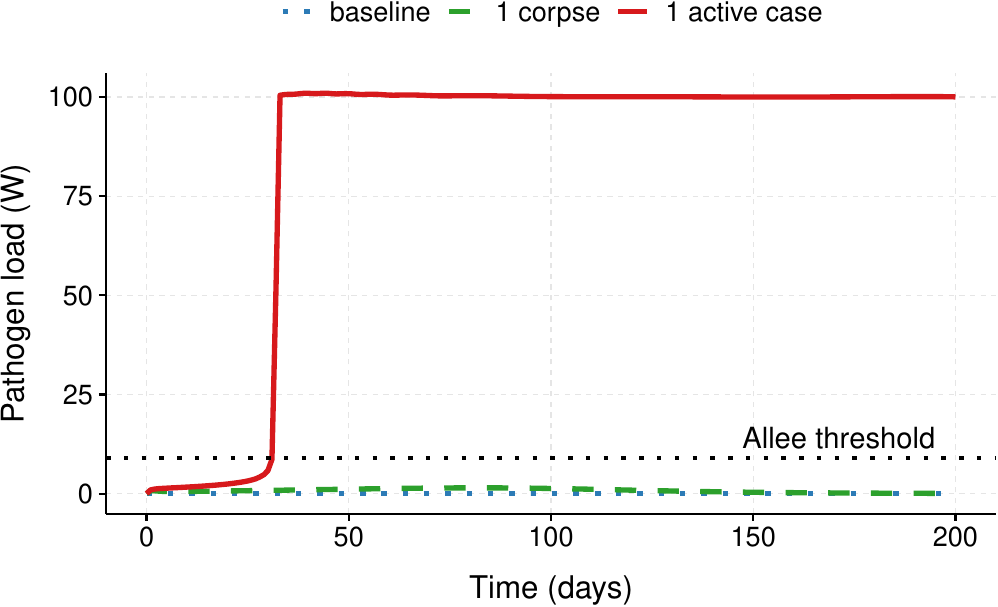}
    \caption{Environmental pathogen load $W(t)$}
    \label{fig:plot-funeral-impulse-W}
  \end{subfigure}
  \caption{Comparison of the nature of the introduced case (single active case versus single unburied corpse) at time $t=0$ under targeted vaccination ($\R_v < 1$ with $\simeq 46\%$ coverage). 
  (a) Human prevalence $I(t)$. 
  (b) Scaled environmental pathogen load $W(t)$.}
  \label{fig:plot-funeral-impulse}
\end{figure}

Figure~\ref{fig:plot-funeral-impulse} compares these introduction scenarios against a pathogen-free baseline. While the baseline stays at zero and the unburied corpse scenario decays back to zero (because the corpse is buried within 2 days, before its shedding can breach the Allee threshold in the stabilised population), the introduction of a single active human case triggers a full-blown epidemic. 
This pushes the environmental pathogen load $W(t)$ past the threshold, forcing the environment into a stable endemic carrying capacity state ($W \to K_C$) and driving a human epidemic wave that peaks with a prevalence of $7.6\%$ of the population.
Note that the prevalence at the endemic equilibrium is still quite small, so visually $I^\star$ appears close to 0; it is however indeed endemic.

%%%%%%%%%%%%%%%%%%%%%%
%%%%%%%%%%%%%%%%%%%%%%
\subsection{Trade-off between WASH and vaccination}
\label{subsec:WASH-vs-vaccination}
As noted earlier, there is no silver bullet for cholera; therefore, comparing intervention mechanisms can seem somewhat futile: in a perfect world, all intervention mechanisms will be mobilised concurrently in order to curtail an outbreak.
However, in practice, resources are often limited.
Therefore, considering the most efficacious intervention is valuable.

\begin{figure}[htbp]
  \centering
  \begin{subfigure}{0.493\textwidth}
    \centering
    \includegraphics[width=\textwidth]{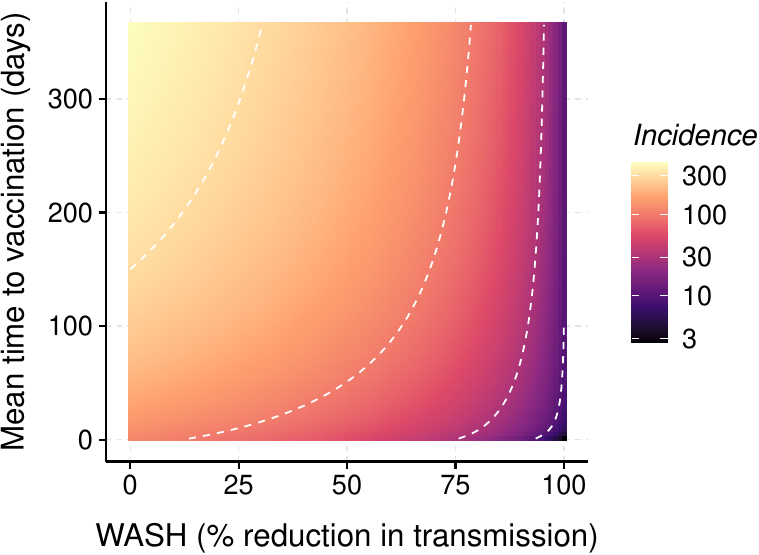}
    \caption{WASH vs. mean time to vaccination ($1/v$)}
    \label{fig:heatmap-tradeoff-wash-vaccination-rate}
  \end{subfigure}
  \begin{subfigure}{0.493\textwidth}
    \centering
    \includegraphics[width=\textwidth]{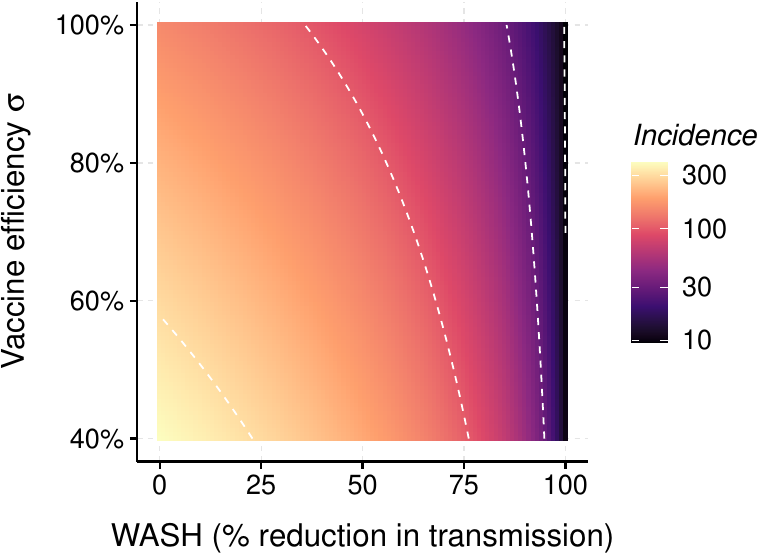}
    \caption{WASH vs. vaccine efficiency ($\sigma$)}
    \label{fig:heatmap-tradeoff-wash-vaccine-efficiency}
  \end{subfigure}
  \caption{Trade-off analysis between WASH interventions and vaccination effort. The heatmaps display the total 1-year incidence (cumulative cases, log-scaled) following an environmental contamination event ($W_0 > A_C$), plotted as a function of the percentage reduction in water-related transmission due to WASH and (a) the mean time to vaccination $1/v$ (days) or (b) the vaccine efficiency $\sigma$.}
  \label{fig:heatmap-tradeoff-wash-vaccine}
\end{figure}

To illustrate, let us perform a simple trade-off analysis between WASH interventions and vaccination effort. 
Figure~\ref{fig:heatmap-tradeoff-wash-vaccine} maps the total 1-year cumulative cases under varying levels of combined interventions. 
We simulated a scenario where the environment is severely contaminated at the onset of the outbreak (i.e., the bacterial load is seeded slightly above the Allee threshold $A_C$). 
The horizontal axis represents the strength of the WASH intervention (implemented as a percentage reduction applied uniformly to both the water-to-human transmission $\beta_W$ and human-to-water shedding $\zeta$), while the vertical axis represents the ``agility'' of the vaccination campaign (the mean time to vaccination $1/v$, ranging from 1 to 365 days). 
To evaluate a purely reactive vaccination strategy, we assume that no individuals are vaccinated at the onset of the outbreak ($V_0 = 0$). 

We observe in Figure~\ref{fig:heatmap-tradeoff-wash-vaccine} that WASH interventions have a more pronounced effect than vaccination.
Importantly, intense WASH interventions reduce the importance of vaccination.
When WASH interventions are limited, on the other hand, a nimble vaccination campaign vaccinating individuals quickly becomes an important tool in reducing total incidence.
Notably, the shape of the transition zone highlights the substitution effect between the two measures: a deficiency in sanitary infrastructure can be compensated by an aggressive, rapid vaccination campaign and vice-versa.

\section{Discussion}
\label{sec:discussion}

By separating the disease dynamics into a local human-to-human epidemic scale and a larger aquatic ecological scale, the inclusion of an environmental Allee effect in the dynamics of cholera in the environment affects cholera control.
Traditionally, reducing the effective reproduction number below unity ($\R_v < 1$) or, in the undesirable scenario of presence of a backward bifurcation, to $\R_v<\R_v^{SN}< 1$, guarantees disease eradication. 
However, we show the existence of an unstable boundary separating the basins of attraction of the stable states, so that if a sudden influx of contamination pushes the environmental pathogen concentration past some threshold linked to the Allee coefficient $A_C$, the bacteria autonomously replicate toward a higher carrying capacity tied to the Allee carrying capacity $K_C$. 
Once this macroscopic endemic state is established, vaccination alone cannot eliminate the disease, formalising the consensus that there is ``no silver bullet'' for cholera \cite{azman2020surveillance} and implying that active water, sanitation and hygiene (WASH) interventions must be deployed to sanitize the water and force the system back below the $A_C$ threshold.

Because both vaccination and WASH interventions are necessary but resource-limited, a trade-off analysis considered the combined effect of WASH and vaccination.
As shown in Figure~\ref{fig:heatmap-tradeoff-wash-vaccine}, weak interventions in either domain lead to massive outbreaks, but a deficiency in sanitary infrastructure can be partially compensated by a rapid, aggressive vaccination campaign and vice versa.
This analysis also highlights an obvious fact: if WASH interventions are near-perfect, i.e., in locations with adequate sanitation installations, vaccination becomes irrelevant.

In this context, it is important to carefully interpret the role of transmission from deceased individuals. While our sensitivity analysis indicates that water-mediated transmission dominates the macroscopic, sustained spread of the infection, contact with infectious bodies should not be dismissed as inconsequential. 
Instead, it acts as a critical early catalyst. Empirical evidence, such as the 1994 cholera epidemic in Guinea-Bissau \cite{gunnlaugsson1998funerals}, found that eating at funerals and handling non-disinfected corpses were strongly associated with a rapid rise in cases at the start of the epidemic.
While was saw in Figure~\ref{fig:plot-funeral-impulse} that a rapidly buried single body has less consequence than a single life case, it remains that anything that can help the system cross the $A_C$ threshold has the potential to generate an outbreak.

Regarding the human subsystem, although our analysis in Appendix~\ref{app:vaccine-backward-bifurcation} proves that a vaccine-induced backward bifurcation is possible, we argue using considerations on relations between parameters and confirm numerically that no such bifurcation happens in realistic parameter regions.
This is reassuring from a public health perspective, as it implies that the rollout of a vaccination campaign has a predictable and strictly favorable effect on disease prevalence.

Finally, we must highlight a fundamental limitation shared by this model and many others in the literature: the assumption (and technical requirement) that \emph{V. cholerae} is entirely absent at the human disease-free equilibrium.
While this assumption facilitates mathematical analysis by allowing the computation of a standard $\R_0$ or $\R_v$ and the study of bifurcations, it does not align with biological reality.
Indeed, the $W=0$ state is a necessary mathematical abstraction for isolating epidemic dynamics from the endemic ecological baseline, rather than a state that is biologically achievable.
It is well accepted that \emph{V. cholerae} is a naturally occurring aquatic organism that persists in the environment even between human outbreaks \cite{lutz2013environmental}. 
Future mathematical work should consider models in which there is no disease-free equilibrium for the bacteria, implying in turn the absence of a strict disease-free equilibrium for humans. 
Such models would operate similarly to systems with a constant immigration of infectious individuals \cite{almarashi2019effect,brauer2001models,djuikem2025transmission}; exploring this dynamic remains an important avenue for better understanding the true environmental persistence of cholera.

%%%%%%%%%%%%%%%%%%%
%%%%%%%%%%%%%%%%%%%
\subsection*{Acknowledgements}
JA is partially supported by the Natural Sciences and Engineering Research Council (NSERC) of Canada through the Discovery Grants program.

%%%%%%%%%%%%%%%%%%%
%%%%%%%%%%%%%%%%%%%
\subsection*{Data availability statement}
The code used in the computational analysis will be made available on Github.

\subsection*{Ethics declarations}
\subsubsection*{Conflict of interest}
The authors declare that they have no Conflict of Interest.
\subsubsection*{Ethical approval}
This study did not involve any experiments with human participants or animals.
%%%%%%%%%%%%%%%%%%%
%%%%%%%%%%%%%%%%%%%
%%%%%%%%%%%%%%%%%%%
%%%%%%%%%%%%%%%%%%%
\appendix

%%%%%%%%%%%%%%%%%%%
%%%%%%%%%%%%%%%%%%%
%%%%%%%%%%%%%%%%%%%
%%%%%%%%%%%%%%%%%%%
\section{Nondimensionalisation of $W$}
\label{app:nondimensionalisation-W}

Note that the shedding rate needs to be computed from the \emph{per capita} shedding rate, as follows.
Let us denote $\alpha$ the \emph{per capita} shedding rate for humans into the environment (in cells$\times$people$^{-1}\times$day$^{-1}$), $C$ the concentration of bacteria in the water (in cells$\times$L$^{-1}$) and $Q$ the volume of water (in L).
We assume that the force of infection for transmission of the pathogen from the water occurs at the saturating rate
\begin{equation}\label{eq:infection-by-concentration-in-water}
    \lambda_W=\frac{\beta_WC}{\kappa+C},
\end{equation}
where $\kappa$ is the half-saturation constant (in cells$\times$L$^{-1}$).
To nondimensionalise and remove the dependence on water volume, one proceeds as follows.
\emph{Ceteris paribus}, the concentration $C$ of bacteria in the water is given by
\[
\frac{dC}{dt} = \frac{\alpha}{Q}I + r_C C \left( 1 - \frac{C}{\kappa K_C} \right) \left( \frac{C}{\kappa A_C} - 1 \right).
\]
Consider the dimensionless variable $W=C/\kappa$.
Substituting $W$ into \eqref{eq:infection-by-concentration-in-water} gives
\[
\lambda_W = \beta_W\frac{W}{1+W},
\]
which is the form in \eqref{sys:general_form_FoI}. 
For the environmental compartment, applying $C=\kappa W$ into the dynamics of $C$ and dividing both sides by $\kappa$ yields
\[
\frac{dW}{dt} = \zeta I + \eta\zeta D + r_C W \left( 1 - \frac{W}{K_C} \right) \left( \frac{W}{A_C} - 1 \right),
\]
where $\zeta=\alpha/(\kappa Q)$ is a constant (in people$^{-1}\times$day$^{-1}$). 
Thus, the dimensionless model correctly preserves the Allee dynamics, with the physical carrying capacity being $\kappa K_C$ and the physical Allee threshold being $\kappa A_C$.
It is the latter form that is used in \eqref{sys:general_form_dWv}.

%%%%%%%%%%%%%%%%%%%
%%%%%%%%%%%%%%%%%%%
%%%%%%%%%%%%%%%%%%%
%%%%%%%%%%%%%%%%%%%
\section{Basic results}
\label{app:basic-results}

The following lemma is given without proof, as it is quite classic.
It is however required in the text.
\begin{lemma}
\label{lemma:solutions-positive}
If initial conditions \eqref{sys:general_form_IC} are positive, then solutions of \eqref{sys:general_form_v} are positive for all $t\geq 0$.
\end{lemma}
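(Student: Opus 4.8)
The plan is to prove this by the standard ``quasi-positivity'' template: show that the vector field defining \eqref{sys:general_form_v} is essentially non-negative on the closed positive orthant $\mathbb{R}_+^6$, so that $\mathbb{R}_+^6$ is positively invariant, and then use the \emph{a priori} bounds from Lemma~\ref{lemma:bounded-region} to upgrade positive invariance on the maximal interval of existence to a statement valid for all $t\ge 0$. Existence and uniqueness of solutions being already granted, only the sign and the global extension need attention.

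Concretely, I would first rewrite each scalar equation of \eqref{sys:general_form_v} in the form $\dot x_i = g_i(\bX) - h_i(\bX)\,x_i$, where, using \eqref{sys:general_form_FoI} and assuming $\bX\ge 0$, both $g_i(\bX)\ge 0$ and $h_i(\bX)\ge 0$: for $S$ one has $g_S = b+\varepsilon R+\theta V$ and $h_S = \lambda_S+v+d_H$; for $I$, $g_I = \lambda_S S+\lambda_V V$ and $h_I = \gamma+\delta+d_H$; and $g_R=\gamma I$, $g_V=vS$, $g_D=\delta I$, $g_W=\zeta I$ with constant rates $h_R=\varepsilon+d_H$, $h_V=\theta+\lambda_V+d_H$, $h_D=e$, $h_W=d_C$. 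Along any solution the integrating-factor representation
\[
x_i(t) = x_i(0)\,\exp\!\Big(-\!\int_0^t h_i(\bX(\tau))\,d\tau\Big) + \int_0^t g_i(\bX(s))\,\exp\!\Big(-\!\int_s^t h_i(\bX(\tau))\,d\tau\Big)\,ds
\]
then yields $x_i(t)\ge 0$ as soon as $x_i(0)\ge 0$ and $\bX(\tau)\ge 0$ on $[0,t]$; moreover, since $S(0)>0$, the first term alone gives $S(t)>0$.

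Because the $g_i$ couple the components, this estimate is not self-contained, and the one point requiring care is exactly this interdependence. I would handle it by a minimal-time contradiction argument: set $t^\star = \sup\{\,t\in[0,T_{\max}) : S(\tau)>0 \text{ and } I(\tau),R(\tau),V(\tau),D(\tau),W(\tau)\ge 0 \text{ for all } \tau\in[0,t]\,\}$, where $T_{\max}$ is the right endpoint of the maximal interval of existence. If $t^\star<T_{\max}$, continuity forces all components non-negative on $[0,t^\star]$ with at least one, say $x_j$, vanishing at $t^\star$; but then the displayed formula on $[0,t^\star]$ gives $x_j(t^\star)\ge 0$ and $\dot x_j(t^\star)=g_j(\bX(t^\star))\ge 0$ (strictly for $S$), so the relevant constraint cannot be violated immediately after $t^\star$, contradicting maximality. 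Hence $t^\star=T_{\max}$, i.e. the orthant (with $S$ positive) is forward invariant along the solution. Lemma~\ref{lemma:bounded-region} then provides a bound on $\bX(t)$, which rules out finite-time blow-up, so $T_{\max}=+\infty$ and the lemma follows.

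The only real obstacle is therefore the bookkeeping of the component coupling in the integrating-factor estimate; once the minimal-time device is in place the rest is routine, which is presumably why the statement is quoted in the text without proof.
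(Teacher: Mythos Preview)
The paper explicitly omits the proof of this lemma, stating it ``is given without proof, as it is quite classic,'' so there is no argument to compare yours against. Your outline is the standard quasi-positivity/tangent-condition approach and is essentially correct.

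Two small points worth tightening. First, in the minimal-time step, $x_j(t^\star)=0$ together with $\dot x_j(t^\star)=g_j(\bX(t^\star))\ge 0$ does not by itself preclude $x_j$ from turning negative immediately (think of $-t^3$ at $t=0$); you need either the integrating-factor representation applied on a short interval past $t^\star$ using continuity of $g_j,h_j$, or the standard $\epsilon$-perturbation trick (replace $g_i$ by $g_i+\epsilon$ to make the inequality strict on the boundary and pass to the limit), or the observation that the boundary faces $\{I=D=W=0\}$ etc.\ are themselves invariant. Second, citing Lemma~\ref{lemma:bounded-region} for the no-blow-up step is slightly circular in presentation, since that lemma is stated after this one and its proof uses positivity; but the differential inequalities in its proof are valid on $[0,T_{\max})$ once you have established non-negativity there, so the logic is sound---just phrase it as re-deriving those bounds rather than invoking the lemma as a black box.
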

With this in mind, we now prove Lemma~\ref{lemma:bounded-region}.
\begin{proof}[Proof of Lemma~\ref{lemma:bounded-region}]
We have
\begin{align}
     N'_H &= S'+I'+R'+V'= b -\delta I - d_H N_{H} \leq b - d_H N_H,
\end{align}
Thus,
\[
N_H(t) \leq e^{-t d_H}\left[ N_H(0) - \frac{b}{d_H} \right] + \frac{b}{d_H}.
\]
Taking the limit, we obtain that for all sufficiently large $t$,
\[
N_H(t) \leq \frac{b}{d_H}.
\]
Now, for the environmental compartment, bounding $-d(W) \leq 0$ implies
\begin{align*}
      W' &= \zeta  I + \eta\zeta D + r_C W \left( 1 - \frac{W}{K_C} \right) \left( \frac{W}{A_C} - 1 \right) \\ 
         &\leq \zeta \left(1 + \eta \frac{\delta}{e}\right) \frac{b}{d_H} + r_C W \left( 1 - \frac{W}{K_C} \right) \left( \frac{W}{A_C} - 1 \right) \\
      &\implies W(t) \leq W_{\max}
\end{align*}
for all sufficiently large $t$.
Finally,
\begin{align*}
      D' & =  \delta  I  - e  D \leq \delta \frac{b}{d_H} - e D
\end{align*}
and thus, for all sufficiently large $t$,
\[
D(t) \leq \frac{\delta b}{e d_H}.
\]
Clearly, solutions starting in $\Omega$ satisfy these inequalities for all $t\geq 0$, giving also the positive invariance of $\Omega$.
\end{proof}

%%%%%%%%%%%%%%%%%%%
%%%%%%%%%%%%%%%%%%%
%%%%%%%%%%%%%%%%%%%
%%%%%%%%%%%%%%%%%%%
\section{The system without vaccination}
\label{app:system-without-vaccination}
Studying the system without disease is very helpful as many of the properties carry through to the full system.
In the absence of vaccination, \eqref{sys:general_form_v} simplifies to the following,
\begin{subequations}
		\label{sys:no-vaccination}
		\begin{align}
			\frac{d}{dt}S &= 
			b+\varepsilon R -\lambda_S  S 
			- d_H S 
			\label{sys:no-vaccination-dS} \\ 
			\frac{d}{dt}I  &= 
			\lambda_S  S -(\gamma + \delta +d_H)I  
			\label{sys:no-vaccination-dI} \\ 
			\frac{d}{dt}R  &=
			\gamma  I -(\varepsilon +d_H)R 
			\label{sys:no-vaccination-dR} \\ 
			\frac{d}{dt}D  &= 
			\delta I -e D 
			\label{sys:no-vaccination-dD} \\
			\frac{d}{dt}W  &= 
			\zeta I + \eta\zeta D + r_C W \left( 1 - \frac{W}{K_C} \right) \left( \frac{W}{A_C} - 1 \right),
			\label{sys:no-vaccination-dW} 
		\end{align}
where $N_H=S+I+R$ and the force of infection is
\begin{equation}\label{sys:no-vaccination-FoI}
\lambda_S = \beta_{W}\frac{W}{1+W}+\beta_{I}\frac{I}{N_H} +\beta_{D}\frac{D}{N_H}.
\end{equation}
\end{subequations}
The following proposition summarises what is known about equilibria of \eqref{sys:no-vaccination}.
\begin{proposition}\label{prop:no-vaccination-equilibria}
System \eqref{sys:no-vaccination} possesses the following equilibria.
\begin{enumerate}
    \item There always exists a unique disease-free equilibrium $\bE_0^{\eqref{sys:no-vaccination}}$, which is locally asymptotically stable if $\R_0 < 1$ and unstable if $\R_0 > 1$.
    \item There always exists a unique macroscopic locally asymptotically stable endemic equilibrium $\bE_m^{\eqref{sys:no-vaccination}}$ with $W^\star > K_C$.
    \item If $\R_0 < 1$, there exists one subthreshold endemic equilibrium $\bE_u^{\eqref{sys:no-vaccination}}$ with $W^\star \in (0, A_C)$, which is unstable.
    \item If $1 < \R_0 < \R_0^{SN}$, there exist two subthreshold endemic equilibria with $W^\star \in (0, A_C)$: a locally asymptotically stable equilibrium $\bE_\ell^{\eqref{sys:no-vaccination}}$ and an unstable equilibrium $\bE_u^{\eqref{sys:no-vaccination}}$.
    \item If $\R_0 = \R_0^{SN}$, there exists one subthreshold endemic equilibrium $\bE_{SN}^{\eqref{sys:no-vaccination}}$ with $W^\star \in (0, A_C)$, which is a non-hyperbolic saddle-node.
    \item If $\R_0 > \R_0^{SN}$, there are no subthreshold endemic equilibria with $W^\star \in (0, A_C)$.
\end{enumerate}
\end{proposition}

\begin{proof}
    Uniqueness and local stability of the disease-free equilibrium are established in Section~\ref{app:subsec:no-vaccination-DFE}. 
    Existence and stability of the macroscopic equilibrium $\bE_m^{\eqref{sys:no-vaccination}}$ is proved in Lemma~\ref{lemma:one-large-W-EEP}.
    The properties of the subthreshold equilibria are respectively established by Lemmas~\ref{lemma:subthreshold-Eu}, \ref{lemma:subthreshold-two-equilibria}, \ref{lemma:subthreshold-tangent} and \ref{lemma:subthreshold-none}.
\end{proof}

%%%%%%%%%%%%%%%%%%%%%%%%%%%
%%%%%%%%%%%%%%%%%%%%%%%%%%%
\subsection{Uniqueness and stability of the disease-free equilibrium}
\label{app:subsec:no-vaccination-DFE}

Assume $I = 0$, then \eqref{sys:no-vaccination-dR} and \eqref{sys:no-vaccination-dD} imply that $R = 0$ and $D = 0$ and setting \eqref{sys:no-vaccination-dW} to zero with $I=0$ gives
\begin{equation}
    r_C W \left( 1 - \frac{W}{K_C} \right) \left( \frac{W}{A_C} - 1 \right) = 0.
\end{equation}
This yields three possible equilibria for cholera, $W \in \{0, A_C, K_C\}$. However, consider the force of infection acting on susceptible individuals, 
\[
\lambda_S = \beta_W \frac{W}{1+W} + \beta_I \frac{I}{N_H} + \beta_D \frac{D}{N_H}.
\]
If $W = A_C > 0$ or $W = K_C > 0$, then $\lambda_S > 0$. 
Since $b > 0$  ensures $S > 0$, the rate of new infections evaluates to
\begin{equation}
    \frac{dI}{dt}\bigg|_{I=0} = \lambda_S S > 0.
\end{equation}
This violates the assumption that $I = 0$. 
Thus, the \emph{unique} disease-free equilibrium must occur at $W = 0$, yielding $\bE_0^{\eqref{sys:no-vaccination}}=(b/d_H, 0, 0, 0, 0)$, where the entire population is susceptible and the environment is free of the pathogen.

Using the next-generation matrix approach as in Section~\ref{sec:math-analysis_v}, the basic reproduction number is given by
\begin{equation} \label{eq:R0-simplified}
    \R_{0}= \frac{e r_C\beta_{I}+\delta r_C \beta_{D}+\zeta (e+\eta\delta) \beta_{W} \frac{b}{d_H}}{er_C\left(\gamma +\delta +d_H \right)},
\end{equation}
which matches \eqref{eq:R0}.
Using the same type of matrix argument as in the end of Appendix~\ref{app:subsec:full-model-DFE}, it follows by \cite[Theorem 2]{VdDWatmough2002} that the disease-free equilibrium is locally asymptotically stable if $\R_0 < 1$ and unstable if $\R_0 > 1$.

%%%%%%%%%%%%%%%%%%%%%%%%%%%%%
%%%%%%%%%%%%%%%%%%%%%%%%%%%%%
\subsection{Nullclines of the system without vaccination}
\label{app:subsec:nullclines-no-vaccination}

\begin{figure}[htbp]
    \centering
    \begin{tikzpicture}
    \begin{axis}[
        width=0.8\textwidth, height=0.35\textheight,
        axis x line=center, % Draws x-axis directly at y=0
        axis y line=left,   % Draws y-axis at x=0
        xlabel={$W$},
        ylabel={$I$},
        xlabel style={at={(ticklabel* cs:1)}, anchor=north},
        ylabel style={at={(ticklabel* cs:1)}, anchor=east, rotate=-90},
        xmin=0, xmax=7.5,
        ymin=0, ymax=3.5,
        xtick=\empty,   % Hidden so we can place custom non-overlapping nodes below
        ytick=\empty,
        % legend cell align={left}, % Aligns the legend text to the left
        % legend style={
        %     at={(0.03,1.03)}, % Shifted up above the plot area to avoid the red box
        %     anchor=south west, 
        %     fill=white, 
        %     draw=black!30, 
        %     nodes={scale=0.95}
        % },
        axis line style={thick},
        % Set clip to false so the x and y labels don't get clipped by the bounding box
        clip=false 
    ]
    % Forbidden Region (extends down into the negative quadrant)
    \fill[red!10] (axis cs:2,0.0) rectangle (axis cs:5,3.5);
    % Redraw the segment of the x-axis that gets covered by the red background fill
    \draw[thick] (axis cs:2,0) -- (axis cs:5,0);
    \node[text=red!70!black, align=center] at (axis cs:3.5, 2.8) {\small $I_C(W^\star) < 0$ \\ \small No equilibria};
    % I_C(W) Curve (Environmental Requirement)
    % Plotted continuously, revealing the mathematical dip. 
    % Domain capped at 6.5 so it doesn't shoot past the top of the y-axis when clip=false
    \addplot[domain=0:6.5, samples=200, very thick, blue] {0.8 * x * (1 - x/5) * (1 - x/2)};
    % \addlegendentry{Cholera nullcline $I_C(W^\star)$}
    
    % I_H(W) Curve for R0 < 1
    \addplot[domain=0:7.5, samples=200, very thick, orange!90!black, dashed] {0.4 * x / (1 + 0.5*x)};
    % \addlegendentry{Human nullcline $I_H(W^\star)$ ($\R_0 < 1$)}
    
    % I_H(W) Curve for R0 > 1
    \addplot[domain=0:7.5, samples=200, very thick, red] {1.0 * x / (1 + 0.25*x)};
    % \addlegendentry{Human nullcline $I_H(W^\star)$ ($\R_0 > \R_0^{SN}$)}
    
    % DFE Marker
    \fill[black] (axis cs:0,0) circle (2.5pt) node[anchor=south west, yshift=-20pt] {$\bE_0^{\eqref{sys:no-vaccination}}$};
    
    % Tick marks and custom axis labels for A_C and K_C 
    % Removed white fill to make labels transparent
    \draw[thick] (axis cs:2, 0.08) -- (axis cs:2, -0.08);
    \draw[thick] (axis cs:5, 0.08) -- (axis cs:5, -0.08);
    \node[anchor=north, yshift=-4pt, inner sep=1.5pt] at (axis cs:2,0) {$A_C$};
    \node[anchor=north, yshift=-4pt, inner sep=1.5pt] at (axis cs:5,0) {$K_C$};
    
    % --- Exact Intersections for R0 < 1 ---
    \fill[black] (axis cs:1.18, 0.30) circle (2.5pt);
    \draw[dotted, thick] (axis cs:1.18, 0.30) -- (axis cs:1.18, 0);
    \node[anchor=north, yshift=-4pt, fill=white, inner sep=1.5pt] at (axis cs:1.18,0) {$W_u$};
    
    \fill[black] (axis cs:5.40, 0.58) circle (2.5pt);
    \draw[dotted, thick] (axis cs:5.40, 0.58) -- (axis cs:5.40, 0);
    \node[anchor=north, yshift=-4pt, fill=white, inner sep=1.5pt] at (axis cs:5.40,0) {$W^\star_1$};
    
    % --- Exact Intersection for R0 > 1 ---
    \fill[black] (axis cs:6.18, 2.43) circle (2.5pt);
    \draw[dotted, thick] (axis cs:6.18, 2.43) -- (axis cs:6.18, 0);
    \node[anchor=north, yshift=-4pt, fill=white, inner sep=1.5pt] at (axis cs:6.18,0) {$W^\star_2$};
        
    \end{axis}
    \end{tikzpicture}
    \caption{Nullclines for the system without vaccination \eqref{sys:no-vaccination}. Blue: cholera nullcline $I_C(W)$.
    Only two of the possible nullclines $I_H(W)$ are shown: $\R_0<1$ (dashed orange) and $\R_0>\R_0^{SN}$ (red).
    See Figure~\ref{fig:allee-nullclines-novac-lt-A_C} for details on the region where $W\leq A_C$ and all possible $I_H(W)$ nullclines cases.}
    \label{fig:allee_nullclines_novac}
\end{figure}

In order to study the remaining equilibria of \eqref{sys:no-vaccination}, we conduct a geometrical analysis of the nullclines of the system.
We establish their form here to make the later reasoning easier to follow.
The situation is illustrated in Figure~\ref{fig:allee_nullclines_novac}. 
Note that this figure does not show all possible situations but is useful to understand the main features.

Setting the derivatives in \eqref{sys:no-vaccination} to zero gives that at an equilbrium $\bE^{\eqref{sys:no-vaccination}} = (S,I,R,D,W)$, there holds that
\begin{equation*}
    D = \frac{\delta}{e}I \quad \text{and} \quad R = \frac{\gamma}{\varepsilon+d_H}I.
\end{equation*}
Substituting these into the susceptible equation \eqref{sys:no-vaccination-dS} gives
\begin{equation*}
    S = \frac{b}{d_H} - \left( \frac{\delta}{d_H} + 1 + \frac{\gamma}{\varepsilon+d_H} \right)I.
\end{equation*}
The total human population at the endemic equilibrium is thus naturally bounded below $b/d_H$ by the disease-induced mortality
\begin{equation*}
    N_H = S + I + R = \frac{b}{d_H} - \frac{\delta}{d_H}I.
\end{equation*}
Because $D$, $R$, $S$ and $N_H$ are all parameterized entirely in terms of $I$, the 5-dimensional equilibrium problem reduces to finding the intersections of two curves in the $(W, I)$ plane.
The first is the \emph{cholera nullcline} $\Gamma_C(W, I) = 0$, which follows from \eqref{sys:no-vaccination-dW} by substituting $D = \frac{\delta}{e}I$ to get an effective shedding $\zeta_{\text{eff}} = \zeta \left(1 + \eta \frac{\delta}{e}\right)$, giving
\begin{equation}\label{eq:no-vaccination-I-env}
    \Gamma_C(W, I) = I - \frac{r_C}{\zeta_{\text{eff}}} W \left( 1 - \frac{W}{K_C} \right) \left( 1 - \frac{W}{A_C} \right) = 0.
\end{equation}
The second is the \emph{human nullcline} $\Gamma_H(W, I) = 0$. 
By substituting $S(I) = S_0 - c_1 I$ and $N_H(I) = S_0 - c_2 I$, where $S_0 = {b}/{d_H}$, $c_1 = {\delta}/{d_H} + 1 + {\gamma}/(\varepsilon+d_H)$ and $c_2 = {\delta}/{d_H}$, into the steady-state incidence condition and clearing denominators, we obtain a polynomial relation explicitly in $W$ and $I$:
\begin{equation}\label{eq:no-vaccination-I-hum}
    \Gamma_H(W, I) = \beta_W W (S_0 - c_1 I)(S_0 - c_2 I) - (1+W) I (D_0 + D_1 I) = 0,
\end{equation}
where we have denoted the constants $D_0 = \left(\gamma+\delta+d_H - \beta_I - \beta_D {\delta}/{e}\right) S_0$ and $D_1 = c_1 \left(\beta_I + \beta_D {\delta}/{e}\right) - c_2 (\gamma+\delta+d_H)$ for convenience. 
Note that $D_0 > 0$ when the human-to-human reproduction number is less than 1.

The endemic equilibria for the system without vaccination are the intersections $\Gamma_C(W, I) = \Gamma_H(W, I) = 0$ in the positive quadrant. 
Because $\Gamma_C(W, I) = 0$ and $\Gamma_H(W, I) = 0$ are both algebraic curves of degree 3, Bezout's Theorem states that they intersect in at most 9 points in the complex projective plane.

To make the geometric analysis simpler, we can express these nullclines as explicit 1D functions.
From \eqref{eq:no-vaccination-I-env}, we define $I_C(W)$ as
\begin{subequations}
\label{eq:I_C-nullcline-novacc}
\begin{equation}
\label{eq:I_C-nullcline-novacc-fct}
I_C(W) = \frac{r_C}{\zeta_{\text{eff}}} W \left( 1 - \frac{W}{K_C} \right) \left( 1 - \frac{W}{A_C} \right).
\end{equation}
We have
\begin{equation}
\label{eq:I_C-nullcline-novacc-derivative}
I_C'(W) = \frac{r_C}{\zeta_{\text{eff}}} \left( 1 - 2 \left( \frac{1}{K_C} + \frac{1}{A_C} \right) W + \frac{3}{K_C A_C} W^2 \right)
\end{equation}
and
\begin{equation}
\label{eq:I_C-nullcline-novacc-second-derivative}
I_C''(W) = \frac{2 r_C}{\zeta_{\text{eff}} K_C A_C} \left( 3W - K_C - A_C \right).
\end{equation}
\end{subequations}
The cholera nullcline defines a cubic curve $I_C(W)$ bounded by the Allee effect.

For the human nullcline, we write $W$ as a function of $I$ directly from \eqref{eq:no-vaccination-I-hum} to obtain
\begin{equation}
\label{eq:W_H-nullcline-novacc-fct}
W_H(I) = \frac{I (D_0 + D_1 I)}{\beta_W (S_0 - c_1 I)(S_0 - c_2 I) - I (D_0 + D_1 I)}.
\end{equation}
This rational function satisfies $W_H(0)=0$.
When $I\to\infty$, $W_H(I)\to D_1/(\beta_Wc_1c_2-D_1)$.

confirms that the human nullcline is strictly bounded; as the denominator approaches zero, $W_H(I) \to \infty$, indicating a maximum possible finite incidence $I_\infty \le b/d_H$. 
We denote its inverse by $I_H(W)$, which implicitly defines the monotonically increasing, strictly bounded steady-state human incidence generated by an environmental load $W$.

%%%%%%%%%%%%%%%%%%%%%%%%%%%
%%%%%%%%%%%%%%%%%%%%%%%%%%%
\subsection{Endemic equilibria}
\label{app:subsec:no-vaccination-EEP}

First of all, notice that there can be no positive equilibria in the region shown in red in Figure~\ref{fig:allee_nullclines_novac} since the cholera nullcline $I_C(W)$ is negative there.
% This is a region where the pathogen undergoes auto-replication.
Therefore, further equilibria of \eqref{sys:no-vaccination} can only be found where $W<A_C$ or $W>K_C$, which we consider now.

%%%%%%%%%%%%%%%%%%%%%%%%%%%
\subsubsection{There is a unique equilibrium with $W^\star>K_C$}
\label{app:subsec:1EEP-gt-K_C}

We start by evacuating the simplest case, which happens in the region in Figure~\ref{fig:allee_nullclines_novac} right of the vertical red band.

\begin{lemma}
    \label{lemma:one-large-W-EEP}
    System \eqref{sys:no-vaccination} admits a unique locally asymptotically stable equilibrium point $\bE_m^{\eqref{sys:no-vaccination}}$ where $W^\star>K_C$.
\end{lemma}

\begin{proof}
    We have $I_C(K_C)=0$ and $I_C'(W)>0$ and $I_C''(W)>0$ for all $W>K_C$.
    For the human nullcline, $I_H(W)$ is strictly positive for $W>0$, so $I_H(K_C)>0$.
    Furthermore, $I_H(W)$ is strictly bounded by a maximum finite incidence $I_\infty \le b/d_H$, whereas $I_C(W)$ defines a cubic that limits to $+\infty$ as $W \to \infty$. Since $I_H(W)$ starts above $I_C(W)$ at $W=K_C$ but is bounded as $W \to \infty$, they must intersect. The concavity of $I_H(W)$ and strict convexity of $I_C(W)$ in this region guarantee the intersection is unique.
    As a consequence, there is a unique point of intersection when $W>K_C$ and this point always exists.

    To establish local asymptotic stability, use the nullclines derived in Section~\ref{app:subsec:nullclines-no-vaccination}.
    The dynamics in the $(W, I)$ phase plane are governed by $W' = f(W, I)$ and $I' = g(W, I)$.
    The trace of the corresponding Jacobian matrix evaluated at $\bE_m^{\eqref{sys:no-vaccination}}$ is $\mathsf{tr}(J) = f_W + g_I$. 
    For $W^\star > K_C$, the cubic polynomial in $f(W, I)$ has a strictly negative derivative with respect to $W$, ensuring $f_W < 0$. 
    Because $g_I < 0$ and $f_I > 0$, we have $\mathsf{tr}(J) < 0$.
    
    The determinant is given by $\det(J) = f_W g_I - f_I g_W$. 
    The slopes of the nullclines at the equilibrium point are given by $I_C'(W^\star) = -f_W/f_I$ and $I_H'(W^\star) = -g_W/g_I$. 
    Because $I_H(W)$ crosses $I_C(W)$ from above at the intersection $\bE_m^{\eqref{sys:no-vaccination}}$, it follows that $I_C'(W^\star) > I_H'(W^\star)$. 
    Substituting the slopes into this inequality yields $-f_W/f_I > -g_W/g_I$.
    Multiplying both sides by the negative quantity $f_I g_I$ reverses the inequality to give $f_W g_I < f_I g_W$, which rearranges to $f_W g_I - f_I g_W > 0$.
    Therefore, $\det(J) > 0$. 
    With $\mathsf{tr}(J) < 0$ and $\det(J) > 0$, the equilibrium point $\bE_m^{\eqref{sys:no-vaccination}}$ is locally asymptotically stable.
\end{proof}

%%%%%%%%%%%%%%%%%%%%%%%%%%%
\subsubsection{Situation where $W^\star<A_C$}
\label{app:subsec:EEP-lt-A_C}

Before proceeding further, let us ``zoom in'' the region in Figure~\ref{fig:allee_nullclines_novac} where $W<A_C$.
This is shown in Figure~\ref{fig:allee-nullclines-novac-lt-A_C}.
We start with the orange dashed curve.

\begin{figure}[htbp]
\centering
\begin{tikzpicture}
\begin{axis}[
    width=0.8\textwidth, height=0.35\textheight,
    axis x line=bottom,
    axis y line=left,
    xmin=0, xmax=11.5,
    ymin=0, ymax=4.5,
    xlabel={$W$},
    ylabel={$I$},
    xlabel style={at={(ticklabel* cs:1)}, anchor=north},
    ylabel style={at={(ticklabel* cs:1)}, anchor=east, rotate=-90},
    xtick={10},
    xticklabels={$A_C$},
    ytick=\empty,
    clip=false,
    legend cell align={left},
    legend style={at={(0.98,1)}, anchor=north east, draw=black!30, fill=white, nodes={scale=0.95}}
]

% --- Forbidden Region ---
\fill[red!10] (axis cs:10,0) rectangle (axis cs:11.5,4.5);

% Case 1: R_0 < 1 (One intersection)
% Adjusted slope to move intersection further down the blue curve
\addplot[domain=0:11.5, samples=100, very thick, orange, dashed] {0.36*x / (1 + 0.1*x)};
\addlegendentry{$\R_0 < 1$}

% Case 2: R_0 > 1, Strong Saturation (Two intersections)
% I_H(W) = 2*W / (1 + 0.8*W). Slope > 1, Asymptote = 2.5
\addplot[domain=0:11.5, samples=100, very thick, teal] {2*x / (1 + 0.8*x)};
\addlegendentry{$1<\R_0<\R_0^{SN}$}

% Case 3: R_0 > 1, Tangent (Saddle-Node)
% I_H(W) = 2*W / (1 + 0.5828*W). Mathematically tangent.
\addplot[domain=0:11.5, samples=100, very thick, purple, dashdotted] {2*x / (1 + 0.5828*x)};
\addlegendentry{$\R_0 = \R_0^{SN}$}

% Case 4: R_0 > 1, Low Saturation (Zero local intersections)
% I_H(W) = 2*W / (1 + 0.2*W). Slope > 1, Asymptote = 10
\addplot[domain=0:4, samples=100, very thick, red, loosely dashed] {2*x / (1 + 0.2*x)};
\addlegendentry{$\R_0 > \R_0^{SN}$}

% Cholera Nullcline (Allee bump)
% Mathematically: I_C(W) = W - 0.1*W^2
\addplot[domain=0:10, samples=100, ultra thick, blue] {x - 0.1*x^2};
% \addlegendentry{Cholera nullcline $I_C(W^\star)$}

% --- Intersection Markers ---

% Origin
\filldraw[black] (0,0) circle (2.5pt);
\node[anchor=north west] at (0, -0.15) {$\bE_0^{\eqref{sys:no-vaccination}}$};

% Case 1 Intersection (Eu)
\filldraw[black] (8.00, 1.60) circle (2.5pt);
\node[anchor=south west, text=orange] at (8.00, 1.60) {$\bE_u^{\eqref{sys:no-vaccination}}$};

% Case 2 Intersections (E_\ell and E_u)
\filldraw[black] (1.80, 1.47) circle (2.5pt);
\node[anchor=north west, text=teal] at (1.80, 1.47) {$\bE_\ell^{\eqref{sys:no-vaccination}}$};

\filldraw[black] (6.95, 2.12) circle (2.5pt);
\node[anchor=south west, text=teal] at (6.95, 2.12) {$\bE_u^{\eqref{sys:no-vaccination}}$};

% Case 3 Tangency (Saddle-Node)
\filldraw[black] (4.14, 2.43) circle (2.5pt);
\draw[gray, thin, dotted] (4.14, 2.43) -- (4.14, 0);
\end{axis}
\end{tikzpicture}
\caption{Zoom on the region of Figure~\ref{fig:allee_nullclines_novac} where $0\leq W\leq A_C$.}
\label{fig:allee-nullclines-novac-lt-A_C}
\end{figure}

\begin{lemma}\label{lemma:subthreshold-Eu}
    If $\R_0 < 1$, \eqref{sys:no-vaccination} admits a unique unstable subthreshold endemic equilibrium $\bE_u^{\eqref{sys:no-vaccination}}$ with $W_u \in (0, A_C)$.
\end{lemma}

\begin{proof}
    We have $I_C'(0)=r_C/\zeta_{\text{eff}}$ and $I_H'(0)=C$.
    Evaluating the derivatives of these functions at the origin, we see that $\R_0 < 1 \iff I_H'(0) < I_C'(0)$. 

    Assume $\R_0 < 1$. 
    For small $W^\star > 0$, we have $I_H(W^\star) < I_C(W^\star)$. 
    At $W^\star = A_C$, $I_C(A_C) = 0$. 
    However, because the environment is infectious, $I_H(A_C) > 0$. 
    Thus, the relationship inverts, $I_H(A_C) > I_C(A_C)$. 
    By the Intermediate Value Theorem applied to the difference $I_C(W^\star) - I_H(W^\star)$, the continuous curves must cross at least once in the interval $(0, A_C)$. 
    
    To establish stability of this intersection, evaluate the Jacobian $J$ of the planar system $W'=f(W,I)$ and $I'=g(W,I)$. 
    Its determinant is $\det(J) = f_W g_I - f_I g_W$.
    Because $I_H(W^\star)$ crosses $I_C(W^\star)$ from below at $\bE_u^{\eqref{sys:no-vaccination}}$, the slope condition is $I_C'(W_u) < I_H'(W_u)$. Substituting $I_C' = -f_W/f_I$ and $I_H' = -g_W/g_I$ into this inequality and multiplying by the negative quantity $f_I g_I$ reverses the inequality, yielding $f_W g_I < f_I g_W \implies \det(J) < 0$.
    Since $\det(J) < 0$, the equilibrium point $\bE_u^{\eqref{sys:no-vaccination}}$ is a saddle and is therefore strictly unstable.
\end{proof}

\begin{lemma}\label{lemma:subthreshold-two-equilibria}
    If $1 < \R_0 < \R_0^{SN}$, \eqref{sys:no-vaccination} admits two subthreshold endemic equilibria $\bE_\ell^{\eqref{sys:no-vaccination}}$ and $\bE_u^{\eqref{sys:no-vaccination}}$ with $W^\star \in (0, A_C)$. 
    Furthermore, $\bE_\ell^{\eqref{sys:no-vaccination}}$ is locally asymptotically stable and $\bE_u^{\eqref{sys:no-vaccination}}$ is unstable.
\end{lemma}

\begin{proof}
    The condition $\R_0 > 1$ implies $I_H'(0) > I_C'(0)$. Thus, for infinitesimally small $W^\star > 0$, we have $I_H(W^\star) > I_C(W^\star)$. 
    At $W^\star = A_C$, we again have $I_H(A_C) > I_C(A_C)$ because $I_C(A_C) = 0$ while $I_H(A_C) > 0$.
    For $1 < \R_0 < \R_0^{SN}$, the curve $I_H(W^\star)$ dips below $I_C(W^\star)$ for a bounded subset of $(0, A_C)$. 
    By the Intermediate Value Theorem, the continuous curves must cross at least twice. The concavity of $I_H(W^\star)$ and the strict convexity of $I_C(W^\star)$ in this region guarantee there are exactly two intersections.
    
    At the first intersection $\bE_\ell^{\eqref{sys:no-vaccination}}$, $I_H(W^\star)$ crosses $I_C(W^\star)$ from above, so $I_C'(W^\star) > I_H'(W^\star)$. 
    Following the same geometric slope argument as before, this guarantees $\det(J) > 0$. Furthermore, because $I_C'(W^\star) > 0$ at this intersection and $f_I > 0$, the relation $I_C' = -f_W/f_I$ necessitates that $f_W < 0$. Since $g_I < 0$, the trace $\mathsf{tr}(J) = f_W + g_I < 0$. With $\mathsf{tr}(J) < 0$ and $\det(J) > 0$, $\bE_\ell^{\eqref{sys:no-vaccination}}$ is locally asymptotically stable.
    
    At the second intersection $\bE_u^{\eqref{sys:no-vaccination}}$, $I_H(W^\star)$ crosses $I_C(W^\star)$ from below, yielding $I_C'(W^\star) < I_H'(W^\star)$. This implies $\det(J) < 0$, making $\bE_u^{\eqref{sys:no-vaccination}}$ an unstable saddle point.
\end{proof}

\begin{lemma}\label{lemma:subthreshold-tangent}
    If $\R_0 = \R_0^{SN}$, \eqref{sys:no-vaccination} admits one subthreshold endemic equilibrium $\bE_{SN}^{\eqref{sys:no-vaccination}}$ with $W^\star \in (0, A_C)$.
\end{lemma}

\begin{proof}
    At the critical parameter value $\R_0 = \R_0^{SN}$, the curve $I_H(W^\star)$ is tangent to $I_C(W^\star)$ at a unique point in $(0, A_C)$.
    At this tangency point, $I_H(W^\star) = I_C(W^\star)$ and $I_H'(W^\star) = I_C'(W^\star)$. 
    Substituting the equal slopes into the determinant condition yields $\det(J) = f_W g_I - f_I g_W = 0$. 
    The two equilibria $\bE_\ell^{\eqref{sys:no-vaccination}}$ and $\bE_u^{\eqref{sys:no-vaccination}}$ coalesce into a single non-hyperbolic equilibrium, corresponding to a saddle-node bifurcation.
\end{proof}

\begin{lemma}\label{lemma:subthreshold-none}
    If $\R_0 > \R_0^{SN}$, \eqref{sys:no-vaccination} admits no subthreshold endemic equilibria with $W^\star \in (0, A_C)$.
\end{lemma}

\begin{proof}
    For $\R_0 > \R_0^{SN}$, $I_H(W^\star)$ lies strictly above $I_C(W^\star)$ for all $W^\star \in (0, A_C)$.
    Because $I_H(W^\star) > I_C(W^\star)$ everywhere in this domain, the continuous curves never intersect and, consequently, no equilibrium point exists in this interval.
\end{proof}

% \begin{figure}[htbp]
%     \centering
%     \begin{subfigure}[b]{0.49\textwidth}
%         \includegraphics[width=\textwidth]{phase-portrait-julia-R0_lt_1.png}
%         \caption{$\R_0<1$}
%         \label{fig:phase-portrait-no-vaccination-R0_lt_1}
%     \end{subfigure}
%     \hfill
%     \begin{subfigure}[b]{0.49\textwidth}
%         \includegraphics[width=\textwidth]{phase-portrait-julia-R0_gt_1.png}
%         \caption{$\R_0>1$}
%         \label{fig:phase-portrait-no-vaccination-R0_gt_1}
%     \end{subfigure}
%     \caption{Phase portrait of the system without vaccination in the $(W, I)$ plane for (a) $\R_0<1$ and (b) $\R_0>1$.
%     The cholera nullcline $I_C(W)$ is shown in blue and the human nullcline $I_H(W)$ is shown in red. 
%     Locally asymptotically stable equilibria are shown in green, unstable ones in yellow.
%     The background vector field represents the 2-dimensional approximation assuming $S$, $R$, and $D$ are at their quasi-steady states for a given human incidence $I$.}
%     \label{fig:phase-portrait-no-vaccination}
% \end{figure}

% To further illustrate these dynamics, Figure~\ref{fig:phase-portrait-no-vaccination} displays the phase portraits of the system without vaccination for both $\R_0 < 1$ and $\R_0 > 1$, projecting the 5-dimensional flow onto the $(W, I)$ plane using a quasi-steady state reduction for the remaining variables. The vector field and trajectories highlight the region of attraction for the endemic equilibrium and the disease-free equilibrium.

%%%%%%%%%%%%%%%%%%%%%%%%%
%%%%%%%%%%%%%%%%%%%%%%%%%
%%%%%%%%%%%%%%%%%%%%%%%%%
%%%%%%%%%%%%%%%%%%%%%%%%%
\section{Equilibrium analysis and endemicity with the Allee effect}
\label{app:equilibria}

Here, we consider equilibria of the full system \eqref{sys:general_form_v} as summarised in Proposition~\ref{prop:full-vaccination-equilibria-forward}.
Much of the reasoning follows the same lines as in Appendix~\ref{app:system-without-vaccination}.

%%%%%%%%%%%%%%%%%%%%%%%%%
%%%%%%%%%%%%%%%%%%%%%%%%%
\subsection{Uniqueness and stability of the disease-free equilibrium}
\label{app:subsec:full-model-DFE}
Using the same reasoning as in Appendix~\ref{app:subsec:no-vaccination-DFE}, an infectious environment ($W \in \{A_C, K_C\}$) yields a positive force of infection $\lambda_S > 0$. 
Because the demographic recruitment $b > 0$ ensures a strictly positive pool of susceptible and vaccinated individuals ($S > 0, V > 0$), the rate of new infections evaluates to
\begin{equation}
    \frac{dI}{dt}\bigg|_{I=0} = \lambda_S S + (1-\sigma) \lambda_S V > 0.
\end{equation}
This violates the assumption that $I = 0$. 
Thus, the \emph{unique} disease-free equilibrium must occur at $W = 0$, yielding
\begin{equation}\label{eq:DFE_allee}
    \bE_0 = \left( \frac{b(\theta + d_H)}{d_H(\theta + v + d_H)}, 0, 0, \frac{bv}{d_H(\theta + v + d_H)}, 0, 0 \right).
\end{equation}
\begin{remark} Because the reproduction number $\R_v$ \eqref{eq:R0_vaccination} is derived from the spectral radius of the next-generation matrix, it depends only on the Jacobian of the system evaluated at the disease-free equilibrium $\bE_0$. 
The partial derivative of the Allee effect term with respect to $W$ evaluated at $W=0$ is $-r_C$. 
Consequently, the local cholera dynamics near the origin are governed by this linear decay rate and $\R_v$ is identical to the formulation derived for a model with simple linear bacterial clearance as is frequently used.
\end{remark}

To consider the local asymptotic stability of the disease-free equilibrium $\bE_0$, we need to check that hypotheses (A1)--(A5) of \cite[Theorem 2]{VdDWatmough2002} are satisfied.
Hypotheses (A1)--(A4) follow from the procedure used to derive $F$ and $U$ in the computation of $\R_v$.
Therefore, all we need to check is that the system without disease has the disease-free equilibrium locally asymptotically stable.
In the absence of disease,  \eqref{sys:general_form_v} is the linear system
\[
\begin{pmatrix}
    S'\\
    R'\\
    V'
\end{pmatrix}=\begin{pmatrix}
    -v -d_H & \varepsilon & \theta\\
    0 & -\varepsilon-d_H& 0 \\
    v & 0 & -\theta - d_H
\end{pmatrix}\begin{pmatrix}
    S\\
    R\\
    V
\end{pmatrix}+\begin{pmatrix}
    b\\
    0  \\
   0
\end{pmatrix}.
\]
The matrix in this system is strictly diagonally dominant by columns, so it is invertible and there is a unique positive equilibrium, the disease-free equilibrium $\bE_0$.
Furthermore, as all diagonal entries are negative, the Gershgorin Theorem \cite{feingold1962block} implies that all eigenvalues are negative, so the disease-free equilibrium is always locally asymptotically stable, verifying that assumption (A5) in \cite[Theorem 2]{VdDWatmough2002} holds.

By \cite[Theorem 2]{VdDWatmough2002}, the disease-free equilibrium is therefore locally asymptotically stable if $\R_v < 1$ and unstable if $\R_v > 1$.

%%%%%%%%%%%%%%%%%%%%%%%%%
%%%%%%%%%%%%%%%%%%%%%%%%%
\subsection{Endemic equilibria}
\label{app:subsec:full-model-EEP}
For endemic steady states ($I^\star > 0$), we consider two nullclines very similar to those in Figure~\ref{fig:allee_nullclines_novac} but not shown here.

First, we solve \eqref{sys:general_form_dWv} to determine the human shedding required to maintain a specific equilibrium cholera load $W^\star$. 
At $\bE_\star = (S^\star, I^\star, R^\star, V^\star, D^\star, W^\star)$, substitution of $D^\star$ directly mirrors the simplified model giving the identical condition $\Gamma_C(W^\star, I^\star) = 0$, meaning
\begin{equation}
    \Gamma_C(W^\star, I^\star) = I^\star - \frac{r_C}{\zeta_{\text{eff}}} W^\star \left( 1 - \frac{W^\star}{K_C} \right) \left( 1 - \frac{W^\star}{A_C} \right) = 0.
\end{equation}
From this, we can define $I_C(W^\star)$ as the explicit function for the environmental requirement.
Since $I^\star \ge 0$, $I_C(W^\star)$ is only biologically valid on the disconnected intervals $W^\star \in [0, A_C]$ and $W^\star \in [K_C, \infty)$. 
For $W^\star \in (A_C, K_C)$, no equilibrium can exist because the required human shedding would have to be negative to prevent a population explosion.

Second, holding $W^\star$ constant provides steady forcing on the human subsystem.
Solving the steady-state equations for the remaining compartments yields an implicit polynomial curve $\Gamma_H(W^\star, I^\star) = 0$ for the \emph{human nullcline}. 
This curve implicitly defines $I_H(W^\star)$ as the steady-state human incidence generated by an environmental load $W^\star$. 
Because the water-to-human force of infection saturates ($\beta_W W^\star/(1+W^\star) \to \beta_W$ as $W^\star \to \infty$) and the total human population is finite ($N_H^\star \le b/d_H$), $I_H(W^\star)$ is a monotonically increasing function strictly bounded by a horizontal asymptote $I_{\max}$.

The endemic equilibria of \eqref{sys:general_form_v} lie at the intersections $\Gamma_C(W^\star, I^\star) = \Gamma_H(W^\star, I^\star) = 0$ in the positive quadrant.

%%%%%%%%%%%%%%%%%%%%%%%%%%%
\subsubsection{There is a unique LAS equilibrium with $W^\star>K_C$}
\label{app:subsec:full-1EEP-gt-K_C}

\begin{lemma}
    \label{lemma:full-one-large-W-EEP}
    System \eqref{sys:general_form_v} admits a unique locally asymptotically stable equilibrium point $\bE_m$ where $W^\star>K_C$.
\end{lemma}

\begin{proof}
    At the carrying capacity $W^\star = K_C$, $I_C(K_C) = 0$ but $I_H(K_C) > 0$. Thus, $I_H(K_C) > I_C(K_C)$. 
    As $W^\star \to \infty$, $I_C(W^\star)$ expands as a cubic polynomial, growing unbounded toward $+\infty$. 
    Conversely, $I_H(W^\star)$ is horizontally bounded by $I_{\max}$. 
    Therefore, for sufficiently large $W^\star$, we must have $I_C(W^\star) > I_H(W^\star)$. 
    By the Intermediate Value Theorem, the continuous curves must cross again in the interval $(K_C, \infty)$. 
    Because $I_H(W^\star)$ crosses $I_C(W^\star)$ from above, we have the slope condition $I_C'(W^\star) > I_H'(W^\star)$, which guarantees local asymptotic stability of the projected planar dynamics by ensuring a positive Jacobian determinant ($\det(J) > 0$) and a negative trace ($\mathsf{tr}(J) < 0$), mirroring the slope argument detailed in Lemma~\ref{lemma:one-large-W-EEP}.
\end{proof}

%%%%%%%%%%%%%%%%%%%%%%%%%%%
\subsubsection{Situation where $W^\star<A_C$}
\label{app:subsec:full-EEP-lt-A_C}

Assuming the absence of a vaccine-induced backward bifurcation (i.e., $a < 0$), the human nullcline $I_H(W^\star)$ is strictly concave in this region, guaranteeing at most two intersections with the convex cholera nullcline. 

\begin{lemma}\label{lemma:full-subthreshold-Eu}
    If $\R_v < 1$, \eqref{sys:general_form_v} admits a unique unstable subthreshold endemic equilibrium $\bE_u$ with $W_u \in (0, A_C)$.
\end{lemma}

\begin{proof}
    Evaluating the derivatives of these functions at the origin links the local geometry to the reproduction number: the condition $\R_v < 1$ is equivalent to $I_H'(0) < I_C'(0)$. 

    Assume $\R_v < 1$. For small $W^\star > 0$, we have $I_H(W^\star) < I_C(W^\star)$. 
    At the threshold $W^\star = A_C$, the required shedding drops to zero ($I_C(A_C) = 0$). 
    However, because the environment is infectious, human incidence is strictly positive ($I_H(A_C) > 0$). Thus, the relationship inverts: $I_H(A_C) > I_C(A_C)$. 
    By the Intermediate Value Theorem applied to the difference $I_C(W^\star) - I_H(W^\star)$, the continuous curves must cross at least once in the interval $(0, A_C)$. Because $I_H(W^\star)$ crosses $I_C(W^\star)$ from below, this intersection forms an unstable equilibrium $\bE_u$.
\end{proof}

\begin{lemma}\label{lemma:full-subthreshold-two-equilibria}
    If $1 < \R_v < \R_v^{SN1}$, \eqref{sys:general_form_v} admits two subthreshold endemic equilibria $\bE_\ell$ and $\bE_u$ with $W^\star \in (0, A_C)$. 
    Furthermore, $\bE_\ell$ is locally asymptotically stable and $\bE_u$ is unstable.
\end{lemma}

\begin{proof}
    The condition $\R_v > 1$ implies $I_H'(0) > I_C'(0)$. 
    Thus, for small $W^\star > 0$, we have $I_H(W^\star) > I_C(W^\star)$. 
    At $W^\star = A_C$, we again have $I_H(A_C) > I_C(A_C)$ because $I_C(A_C) = 0$ while $I_H(A_C) > 0$.
    For $1 < \R_v < \R_v^{SN1}$, the curve $I_H(W^\star)$ dips below $I_C(W^\star)$ for a bounded subset of $(0, A_C)$. 
    By the Intermediate Value Theorem, the continuous curves must cross at least twice. 
    The concavity of $I_H(W^\star)$ and the convexity of $I_C(W^\star)$ in this region guarantee there are exactly two intersections.
    
    At the first intersection $\bE_\ell$, $I_H(W^\star)$ crosses $I_C(W^\star)$ from above, generating a locally asymptotically stable equilibrium.
    At the second intersection $\bE_u$, $I_H(W^\star)$ crosses $I_C(W^\star)$ from below, making $\bE_u$ an unstable saddle point.
\end{proof}

\begin{lemma}\label{lemma:full-subthreshold-tangent}
    If $\R_v = \R_v^{SN1}$, \eqref{sys:general_form_v} admits one subthreshold endemic equilibrium $\bE_{SN}$ with $W^\star \in (0, A_C)$.
\end{lemma}

\begin{proof}
    At the critical parameter value $\R_v = \R_v^{SN1}$, the curve $I_H(W^\star)$ is tangent to $I_C(W^\star)$ at a unique point in $(0, A_C)$.
    The two equilibria $\bE_\ell$ and $\bE_u$ coalesce into a single non-hyperbolic equilibrium, corresponding to a saddle-node bifurcation.
\end{proof}

\begin{lemma}\label{lemma:full-subthreshold-none}
    If $\R_v > \R_v^{SN1}$, \eqref{sys:general_form_v} admits no subthreshold endemic equilibria with $W^\star \in (0, A_C)$.
\end{lemma}

\begin{proof}
    For $\R_v > \R_v^{SN1}$, $I_H(W^\star)$ lies strictly above $I_C(W^\star)$ for all $W^\star \in (0, A_C)$.
    Because $I_H(W^\star) > I_C(W^\star)$ everywhere in this domain, the continuous curves never intersect and, consequently, no equilibrium point exists in this interval.
\end{proof}

%%%%%%%%%%%%%%%%%%%%%%%%%%%%%%%%%%
%%%%%%%%%%%%%%%%%%%%%%%%%%%%%%%%%%
\section{The vaccination-induced backward bifurcation}
\label{app:vaccine-backward-bifurcation}

The flow diagram in Figure~\ref{fig:flow_diagram_vaccination} suggests that vaccination could induce a backward bifurcation, because of the similarity of the human part of the system to \cite{arino2003global}.
For mathematical completeness, we show here that this is indeed the case and explore briefly what the consequnces of such a bifurcation would be.
However, we also explain why such a bifurcation is unlikely in practice, meaning that the situation depicted by Proposition~\ref{prop:full-vaccination-equilibria-forward} is the one that generally holds.

\subsection{Existence of a backward bifurcation}
\label{app:existence-backward-bifurcation}
We can establish the existence of a backward bifurcation at $\R_v=1$ and the local asymptotic stability of the bifurcating endemic equilibria using \cite[Theorem 4]{VdDWatmough2002}.

\begin{theorem}
    \label{thm:stability-EEP}
    System \eqref{sys:general_form_v} exhibits a backward bifurcation at $\R_v=1$ if the bifurcation coefficient $a > 0$, with $a$ given by \eqref{eq:a_coeff_full}. 
    In this case, the subthreshold endemic equilibrium $\bE_\star$ is \emph{locally} unstable.
    If $a<0$, then the bifurcation is forward and the superthreshold endemic equilibrium $\bE_\star$ is locally asymptotically stable.
\end{theorem}

\begin{remark}
    The term \emph{locally unstable} is used to indicate that instability holds close to the bifurcation point $\R_v=1$, as far as \cite[Theorem 4]{VdDWatmough2002} is concerned.
    Proving that $\bE_\star$ is unstable whenever there are two or more subthreshold endemic equilibria requires to use other tools.
\end{remark}

\begin{proof}
Let the $n=6$ state variables be ordered such that the $m=3$ infected compartments appear first, $\bx = (I, D, W, S, R, V)$. 
We choose $\beta_W$ as the bifurcation parameter, letting $\beta_W^\star$ be the critical value such that $\R_v = 1$. 

Let $\bw = (w_1, \dots, w_6)^T$ be the right eigenvector of the Jacobian evaluated at $\bE_0$ corresponding to the zero eigenvalue. Solving the infected block yields:
\begin{equation}
    w_1 > 0, \quad w_2 = \frac{\delta}{e}w_1, \quad w_3 = \frac{\zeta_{\text{eff}}}{r_C}w_1.
\end{equation}
The force of infection along this vector is $\lambda^\star = \beta_I \frac{d_H}{b} w_1 + \beta_D \frac{d_H}{b} w_2 + \beta_W^\star w_3$. The uninfected displacement components $w_4$ and $w_6$ (for $S$ and $V$) represent the demographic shift induced by the infection and evaluate to negative displacements proportional to $w_1$.

Let $\bv$ be the corresponding left eigenvector.
Because the transitions from uninfected to infected compartments are identically zero at $\bE_0$, $v_k = 0$ for all $k > m$. 
For the $m=3$ infected components, we obtain
\begin{equation}
    v_1 > 0, \quad v_2 = v_1 \frac{d_H}{b} \frac{\beta_D S_{\text{eff}}}{e}, \quad v_3 = v_1 \frac{\beta_W^\star S_{\text{eff}}}{r_C},
\end{equation}
where the baseline effective susceptible pool is $S_{\text{eff}} = S_0 + (1-\sigma) V_0$.

The direction of the bifurcation is determined by the sign of the constant $a$. 
Because $v_k=0$ for $k>3$, the outer sum is restricted to the infected equations $f_1$, $f_2$, $f_3$,
\begin{equation}
    a = \sum_{k=1}^3 \sum_{i,j=1}^6 v_k w_i w_j \frac{\partial^2 f_k}{\partial x_i \partial x_j}(\bE_0, \beta_W^\star).
\end{equation}
Now evaluate the non-zero second derivatives for the full system equations.
\begin{itemize}
    \item For $f_2$ ($D'$), all second derivatives are identically zero.
    \item For $f_3$ ($W'$), expanding the cubic Allee effect yields $f_3 = \zeta I + \eta\zeta D - r_C W + r_C(\frac{1}{K_C} + \frac{1}{A_C})W^2 - \frac{r_C}{K_C A_C}W^3$. 
    Evaluated at $\bE_0$, the only non-zero second derivative is with respect to $W$,
    \begin{equation}
        \frac{\partial^2 f_3}{\partial W^2} = 2 r_C \left( \frac{1}{K_C} + \frac{1}{A_C} \right).
    \end{equation}
    \item For $f_1$ ($I'$), proportional incidence introduces non-zero second derivatives driven by the population normalization $N_H = S + I + R + V$. 
    Evaluated at $\bE_0$, the sum of the second derivatives along the right eigenvector $\bw$ can be factored into a sum for the environmental pathway and the direct transmission pathway. 
    Letting $\lambda_H^\star = \beta_I \frac{d_H}{b} w_1 + \beta_D \frac{d_H}{b} w_2$ and $\lambda_W^\star = \beta_W^\star w_3$, the full cross-derivative sum for $f_1$ evaluates to:
    \begin{align}
        \sum_{i,j=1}^6 w_i w_j \frac{\partial^2 f_1}{\partial x_i \partial x_j} &= 2 \lambda_W^\star \left(w_4 + (1-\sigma) w_6 - w_3 S_{\text{eff}}\right) \nonumber \\
        &\quad + 2 \lambda_H^\star S_{\text{eff}} \frac{d_H}{b} \left[ -w_1 - w_5 + w_4 \frac{V_0\sigma}{S_{\text{eff}}} - w_6 \frac{S_0\sigma}{S_{\text{eff}}} \right].
    \end{align}
\end{itemize}

Substituting these non-zero terms into the expression for $a$ gives
\begin{align}
    a &= v_1 \sum_{i,j=1}^6 w_i w_j \frac{\partial^2 f_1}{\partial x_i \partial x_j} + v_3 \left[ 2 w_3^2 r_C \left( \frac{1}{K_C} + \frac{1}{A_C} \right) \right]. \nonumber 
\end{align}
Substituting $v_3 = v_1 \frac{\beta_W^\star S_{\text{eff}}}{r_C}$ and simplifying yields the explicit condition,
\begin{align} \label{eq:a_coeff_full}
    a &= 2 v_1 \Bigg[ \lambda_W^\star (w_4 + (1-\sigma) w_6) + \beta_W^\star S_{\text{eff}} w_3^2 \left( \frac{1}{A_C} + \frac{1}{K_C} - 1 \right) \nonumber \\
    &\quad + \lambda_H^\star S_{\text{eff}} \frac{d_H}{b} \left( -w_1 - w_5 + w_4 \frac{V_0\sigma}{S_{\text{eff}}} - w_6 \frac{S_0\sigma}{S_{\text{eff}}} \right) \Bigg].
\end{align}
The transversality coefficient is $b = \sum_{k=1}^3 \sum_{i=1}^6 v_k w_i \frac{\partial^2 f_k}{\partial x_i \partial \beta_W}(\bE_0, \beta_W^\star) = v_1 w_3 S_{\text{eff}} > 0$. 

By \cite[Theorem 4]{VdDWatmough2002}, because $b > 0$, the condition $a > 0$ confirms a backward bifurcation where the subthreshold branch $\bE_\star$ is \emph{locally} unstable. Conversely, if $a < 0$, the bifurcation is forward.
\end{proof}

\begin{figure}[htbp]
\centering
\begin{subfigure}[b]{0.48\textwidth}
\centering
\begin{tikzpicture}
\begin{axis}[
    width=\linewidth, height=1.1\linewidth,
    axis x line=bottom,
    axis y line=left,
    xmin=0, xmax=2.5,
    ymin=-0, ymax=9,
    xlabel={$\R_v$},
    ylabel={$W^\star$},
    xlabel style={at={(ticklabel* cs:1)}, anchor=north},
    ylabel style={at={(ticklabel* cs:1.08)}, anchor=south east, rotate=0},
    xtick={0.46, 1, 1.35},
    xticklabels={$\R_v^{SN1}$, $1$, $\R_v^{SN2}$},
    ytick={4, 7},
    yticklabels={$A_C$, $K_C$},
    clip=false,
    legend style={at={(0.03,0.97)}, anchor=north west, draw=black!30}
]

% 1. DFE (Stable for Rv < 1)
\addplot[domain=0:1, samples=2, very thick, blue] {0};

% 2. DFE (Unstable for Rv > 1)
\addplot[domain=1:2.4, samples=2, very thick, blue, dashed] {0};

% 3. E_back: Subcritical emergence from origin (Unstable)
\addplot[domain=0:0.8, samples=30, very thick, blue, dashed, variable=\y] 
    ({1 - 1.5*\y + 1.2054*\y^2 - 0.2232*\y^3}, {\y});

% 4. E_micro: Micro-endemic state (Stable)
\addplot[domain=0.8:2.8, samples=60, very thick, blue, variable=\y] 
    ({1 - 1.5*\y + 1.2054*\y^2 - 0.2232*\y^3}, {\y});

% 5. E_u: Allee threshold E_u (Unstable)
\addplot[domain=2.8:4, samples=40, very thick, blue, dashed, variable=\y] 
    ({1 - 1.5*\y + 1.2054*\y^2 - 0.2232*\y^3}, {\y});

% 6. Macro-endemic E_m (Stable)
\addplot[domain=0:2.4, samples=50, very thick, blue] {7 + 0.5*x - 0.05*x^2};

% Transcritical Bifurcation dot
\filldraw[black] (1,0) circle (2.5pt);

% Saddle-Node 1 (Immunological)
\filldraw[black] (0.457, 0.8) circle (2.5pt);
\draw[dotted, thick, gray] (0.457, 0.8) -- (0.457, 0);

% Saddle-Node 2 (Ecological)
\filldraw[black] (1.35, 2.8) circle (2.5pt);
\draw[dotted, thick, gray] (1.35, 2.8) -- (1.35, 0);

% Labeling the Branches
\node[anchor=south east, text=blue] at (1, 0.3) {$\bE_{back}$};
\node[anchor=north west, text=blue] at (1, 2) {$\bE_{\ell}$};
\node[anchor=south west, text=blue] at (0.3, 3.9) {$\bE_u$};
\node[anchor=south east, text=blue] at (2.2, 8) {$\bE_m$};

% --- Vector field basin-of-attraction arrows ---
% \draw[-stealth, thick, gray] (0.8, 0.15) -- (0.8, 0.05); % Down to DFE
% \draw[-stealth, thick, gray] (0.8, 0.5) -- (0.8, 1.4);   % Up to E_micro
% \draw[-stealth, thick, gray] (0.8, 3.3) -- (0.8, 1.9);   % Down to E_micro
% \draw[-stealth, thick, gray] (0.8, 4.5) -- (0.8, 7.2);   % Up to E_m
% \draw[-stealth, thick, gray] (0.8, 8.5) -- (0.8, 7.8);   % Down to E_m
% \draw[-stealth, thick, gray] (1.2, 0.2) -- (1.2, 1.7);   % Up from DFE to E_micro
% \draw[-stealth, thick, gray] (1.2, 2.9) -- (1.2, 2.4);   % Down from E_u to E_micro
% \draw[-stealth, thick, gray] (1.2, 4.0) -- (1.2, 7.4);   % Up to E_m
% \draw[-stealth, thick, gray] (2.0, 0.2) -- (2.0, 7.6);   % Up to E_m

\end{axis}
\end{tikzpicture}
\caption{Environmental load $W^\star$}
\end{subfigure}\hfill
\begin{subfigure}[b]{0.48\textwidth}
\centering
\begin{tikzpicture}[
    declare function={
        Rv(\y) = 1 - 1.5*\y + 1.2054*\y^2 - 0.2232*\y^3;
        Ifold(\y) = \y - 0.1*\y^2;
        Im(\x) = 30.0*\x / (1 + 3.0*\x);
    }
]
\begin{axis}[
    width=\linewidth, height=1.1\linewidth,
    axis x line=bottom,
    axis y line=left,
    xmin=0, xmax=2.5,
    ymin=-0, ymax=9,
    xlabel={$\R_v$},
    ylabel={$I^\star$},
    xlabel style={at={(ticklabel* cs:1)}, anchor=north},
    ylabel style={at={(ticklabel* cs:1.08)}, anchor=south east, rotate=0},
    xtick={0.46, 1, 1.35},
    xticklabels={$\R_v^{SN1}$, $1$, $\R_v^{SN2}$},
    ytick={4, 7},
    yticklabels={$A_C$, $K_C$},
    clip=false,
]

% 1. DFE (Stable for Rv < 1)
\addplot[domain=0:1, samples=2, very thick, teal] {0};

% 2. DFE (Unstable for Rv > 1)
\addplot[domain=1:2.4, samples=2, very thick, teal, dashed] {0};

% 3. E_back: Subcritical emergence from origin (Unstable)
\addplot[domain=0:0.8, samples=30, very thick, teal, dashed, variable=\y] 
    ({Rv(\y)}, {Ifold(\y)});

% 4. E_micro: Micro-endemic state (Stable)
\addplot[domain=0.8:2.8, samples=100, very thick, teal, variable=\y] 
    ({Rv(\y)}, {Ifold(\y) - 0.15*sqrt(2.8-\y)*(\y-0.8)^2});

% 5. E_u: Allee threshold E_u (Unstable)
\addplot[domain=2.8:4, samples=100, very thick, teal, dashed, variable=\y] 
    ({Rv(\y)}, {\y - 0.1*\y^2 + 8.0*(\y - 2.8)^2 - 8.0556*(\y - 2.8)^3 + 2.0*sqrt(\y-2.8)*(4-\y)});

% 6. Macro-endemic E_m (Stable)
\addplot[domain=0:2.4, samples=50, very thick, teal] {Im(x)};

% Transcritical Bifurcation dot
\filldraw[black] (1,0) circle (2.5pt);

% Saddle-Node 1 (Immunological)
\filldraw[black] (0.457, {Ifold(0.8)}) circle (2.5pt);
\draw[dotted, thick, gray] (0.457, {Ifold(0.8)}) -- (0.457, 0);

% Saddle-Node 2 (Ecological)
\filldraw[black] (1.35, {Ifold(2.8)}) circle (2.5pt);
\draw[dotted, thick, gray] (1.35, {Ifold(2.8)}) -- (1.35, 0);

% Labeling the Branches
\node[anchor=south east, text=teal] at (1, 0.25) {$\bE_{back}$};
\node[anchor=north west, text=teal] at (1.1, 1.6) {$\bE_{\ell}$};
\node[anchor=south west, text=teal] at (0.5, 3.9) {$\bE_u$};
\node[anchor=south east, text=teal] at (2.2, 7.2) {$\bE_m$};

\end{axis}
\end{tikzpicture}
\caption{Human incidence $I^\star$}
\end{subfigure}
\caption{Bifurcation diagram depicting the steady-state environmental pathogen load $W^\star$ (blue, left) and human incidence $I^\star$ (teal, right) as functions of the vaccinated reproduction number $\R_v$ in the presence of a vaccination-induced backward bifurcation at $\R_v=1$.
The disease-free equilibrium $\bE_0$ exists for both variables.
Continuous curves are locally asymptotically stable, dashed ones are unstable.
$\R_v^{SN1}$ and $\R_v^{SN2}$ denote the saddle-node bifurcation points.}
\label{fig:backward-bifurcation-split}
\end{figure}

Should a vaccine-induced backward bifurcation arise, the situation would be as ``caricatured'' in Figure~\ref{fig:backward-bifurcation-split}, where contrary to Figure~\ref{fig:forward-bifurcation}, we are showing separately the behaviour for $W^\star$ and $I^\star$ for legibility.

\begin{remark} \label{rem:bifurcation-preclusion}
Equation \eqref{eq:a_coeff_full} isolates the competing mechanisms driving hysteresis in the model.
Specifically, the environmental contribution contains the term $w_3^2 \beta_W^\star S_{\text{eff}} \left( {1}/{A_C} + {1}/{K_C} - 1 \right)$.
For this term to act as a positive feedback that could overpower the other negative terms and potentially lead to a backward bifurcation ($a > 0$), we would require ${1}/{A_C} + {1}/{K_C} > 1$. 
Since $K_C \gg 1$, this essentially requires $A_C < 1$. 
Because $A_C$ represents the Allee threshold of the pathogen population in water, an Allee threshold of less than 1 bacterium is physically absurd. 
Under biologically plausible parameter regimes where $A_C \gg 1$, the transversality coefficient $a$ is strictly negative. 
This precludes a vaccine-induced backward bifurcation.
Conversely, the direct transmission terms ($\lambda_H^\star$) capture the self-limitation introduced by proportional incidence. 
Because the infection simultaneously increases the infected pool ($w_1$) and the removed pool ($w_5$), it shrinks the proportion of susceptible in the (finite) population, creating a negative (forward bifurcation-driving) feedback. 
The uninfected demographic shifts $(w_4, w_6)$ are typically negative due to the depletion of susceptibles, meaning hysteresis is primarily an ecological phenomenon.
\end{remark}

% \begin{remark}
% In the ecological literature, the unstable boundary separating the basins of attraction of multiple stable states (such as the subthreshold unstable equilibrium $\bE_u$) is occasionally referred to as a \emph{tipping point}.   
% To maintain mathematical precision, we describe these features strictly using standard dynamical systems vocabulary.
% This is also important because the boundary is typically a complicated object in high dimensional space and calling it a \emph{point} is a potential source of misinterpretation.
% \end{remark}

%%%%%%%%%%%%%%%%%%%%%%%%%%%%
%%%%%%%%%%%%%%%%%%%%%%%%%%%%
\subsection{Numerical existence of the backward bifurcation}
\label{subsec:numerics-existence-BB}

From the evaluation of the transversality coefficient $a$ in \eqref{eq:a_coeff_full}, we know that \eqref{sys:general_form_v} can theoretically exhibit a backward bifurcation.
However, as established analytically in Remark~\ref{rem:bifurcation-preclusion} (see Appendix~\ref{app:existence-backward-bifurcation}), the transversality coefficient $a$ is strictly negative under all biologically plausible parameter regimes. For a backward bifurcation to occur ($a>0$), the Allee threshold $A_C$ would need to be physically absurd (e.g., less than 1 bacterium). Thus, a vaccine-induced backward bifurcation is analytically precluded under realistic assumptions.

To confirm this result numerically and ensure no edge cases were missed, we conducted an extensive search in parameter space.
First, we considered a hyper-rectangle in 12-dimensional parameter space corresponding to the ranges given in Table~\ref{tab:parameter-values_v} for all parameters except $b$ and $d_H$, which remain fixed.
We implemented a targeted numerical search by generating 1,000,000 parameter combinations using Latin Hypercube Sampling (LHS). 
For each parameter set, we analytically solved for the exact water-to-human transmission rate $\beta_W^\star$ required to force the reproduction number to $\R_v = 1$. 

For the 622,110 points where this critical $\beta_W^\star$ fell within biologically realistic bounds, we evaluated the transversality coefficient $a$ given by \eqref{eq:a_coeff_full} by directly computing the second-order partial derivatives.
As analytically predicted, we did not find a single instance where $a > 0$, confirming that a backward bifurcation does not occur within any biologically plausible parameter ranges.

\bibliography{biblio}
\end{document}